\newtheorem{thm}{Theorem}
\newtheorem{lem}{Lemma}
\newtheorem{prop}{Proposition}
\def\E{\mathbb{E}}
\title{Near Optimal Energy Control and Approximate Capacity of Energy Harvesting Communication}
\author{Yishun Dong, Farzan Farnia, and Ayfer \"Ozg\"ur,~\IEEEmembership{Member,~IEEE}
\thanks{Manuscript received April 1, 2014; revised September 1, 2014. This work was supported by the Center for Science of Information (CSoI), an NSF Science and Technology Center, under grant agreement CCF-0939370. An initial version of this paper was presented at the IEEE International Symposium on Information Theory (ISIT), Honolulu, July 2014.}
\thanks{The authors are with Stanford University, Packard Electrical Engineering, 350 Serra Mall, Stanford, California 94305-9510, USA (e-mails: \{ydong2, farnia, aozgur\}@ stanford.edu).}
}
\begin{document}

\maketitle
\thispagestyle{empty}

\begin{abstract}

We consider an energy-harvesting communication system where a transmitter powered by an exogenous energy arrival process and equipped with a finite battery of size $B_{max}$ communicates over a discrete-time AWGN channel. We first concentrate on a simple Bernoulli energy arrival process where at each time step, either an energy packet of size $E$ is harvested with probability $p$, or no energy is harvested at all, independent of the
other time steps. We provide a near optimal energy control policy and a simple approximation to the information-theoretic capacity of this channel. Our approximations for both problems are universal in all the system parameters involved ($p$, $E$ and $B_{max}$), i.e. we bound the approximation gaps by a constant independent of the parameter values. Our results suggest that a battery size $B_{max}\geq E$ is (approximately) sufficient to extract the infinite battery capacity of this channel. We then extend our results to general i.i.d. energy arrival processes. Our approximate capacity characterizations provide important insights for the optimal design of energy harvesting communication systems in the regime where both the battery size and the average energy arrival rate are large.    

\end{abstract}
\begin{keywords}Energy Harvesting Channel, Information-Theoretic Capacity, 
Online Power Control, Constant Gap Approximation, Receiver Side Information.
\end{keywords}
\section{Introduction}
\label{sec: Introduction}

In many future wireless networks, we may encounter nodes (such as sensor nodes) that harvest the energy they need for communication from the natural resources in their environment. The simplest model that captures this communication scenario is the discrete-time AWGN channel depicted in Fig.~\ref{fig: model}, where the transmitter is powered by an exogenous stochastic energy arrival process $E_t$ stored in a battery of size $B_{max}$. The energy of the transmitted symbol at each channel use is limited by the available energy in the battery, which unlike in traditionally powered communication systems is a random quantity due to the randomness in the energy harvesting process $E_t$ and moreover depends on the energy expenditure in the previous channel uses. Understanding the capacity of such newly emerging communication systems and the optimal principles to design and operate them has received significant attention over the recent years \cite{YangUlukus, TutuncuogluYener, OzelUlukus_BInfty,OzelTutuncuoglu_JSAC,MahdaviYates}.

In the limiting case $B_{max} = \infty$, the capacity of the energy harvesting communication system in Fig.~\ref{fig: model} has been characterized by Ozel and Ulukus in \cite{OzelUlukus_BInfty}. Their result shows that in this asymptotic case the capacity of the energy-harvesting system is equal to the capacity of a classical AWGN channel with an average power constraint equal to the average energy harvesting rate  $\E[E_t]$. Perhaps even more importantly, the result of \cite{OzelUlukus_BInfty} offers a number of important engineering insights for the large battery limit: first it shows that, via simple modifications,  the standard communication and coding techniques developed for the classical AWGN channel can be used to achieve the capacity of an energy-harvesting AWGN channel; second, it shows that, in this asymptotic case, the only relevant property of the energy-harvesting process in determining capacity is the average energy harvesting rate. Two energy harvesting mechanisms are equivalent  as long as  they have the same average energy arrival rate.

\begin{figure}[t!]
\centering
\scalebox{0.5}
{\includegraphics{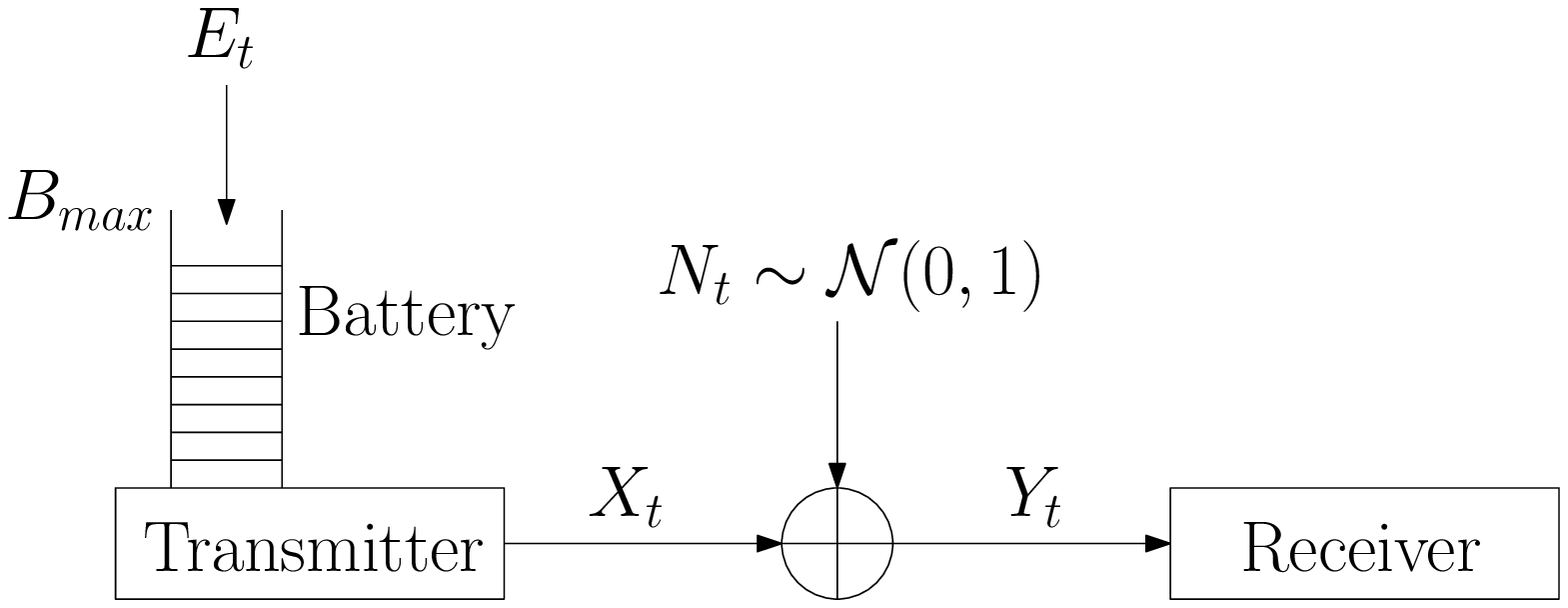}}
\caption{System model for an energy-harvesting AWGN communication system.}
\label{fig: model}
\end{figure}

Can we obtain analogous engineering insights for the more realistic case of finite battery? For example, how does the capacity of the energy-harvesting AWGN channel in Fig.~\ref{fig: model} depend on  major system parameters such as $B_{max}$ and $E_t$? Are there different operating regimes where this dependence is qualitatively different? Given a communication system powered with a certain energy harvesting mechanism $E_t$, how can we optimally choose the battery size $B_{max}$? What are the properties of the energy harvesting process $E_t$ that are critical to capacity in the finite battery regime? Consequently, what are more desirable and less desirable energy-harvesting profiles? These are foremost engineering questions, the answers of which can guide the design of optimal communication architectures for such systems.

Despite significant recent effort \cite{MaoHassibi,DeekshithSarma,TutuncuogluOzel_ISIT} to characterize the capacity of the energy-harvesting channel in Fig.~\ref{fig: model} in the finite battery case, we currently lack an understanding of the above questions. For example, \cite{MaoHassibi} provides a formulation of the capacity in terms of the Verdu-Han general framework and based on a conjecture on the properties of the optimal energy management strategy derives a lower bound to the capacity which is numerically computable for a given setup. However, it is difficult to obtain the above high-level insights from the numerical evaluations. Indeed, even in the case of zero battery, $B_{max}=0$, where \cite{OzelUlukus_Bzero} provides an exact single letter characterization of the capacity  as an optimization problem over the so called Shannon-strategies, the resultant optimization is difficult to solve and requires numerical evaluations, therefore providing limited high-level insights.

\subsection{Overview of Our Results}

In this paper, we take an alternative approach. Instead of seeking the exact capacity, we seek to provide a simple approximation to the capacity (with bounded guarantee on the approximation gap) which can provide insights on the above engineering questions. As a starting point, we concentrate on an i.i.d. Bernoulli energy arrival process, i.e. $E_t=E$ with probability $p$ and zero otherwise, and is independent across different channel uses. We show that in this case the capacity of the energy harvesting AWGN channel in  Fig.~\ref{fig: model} is approximately given by
\begin{equation}\label{eq:approxcap}
C\approx\left\{ \begin{array}{ll}
\frac{1}{2}\log(1+pB_{max}) & \mbox{when $B_{max} \leq E$} \\
\frac{1}{2}\log(1+pE) & \mbox{when $B_{max} > E$}.
\end{array}  \right.
\end{equation}   
The approximation gap is bounded by $2.58$ bits  for all values of the system parameters $p, E$, and $B_{max}$. See Fig.~\ref{fig: CapBound}.

The  capacity approximation in \eqref{eq:approxcap} provides couple of important insights: First, it identifies the dependence of the capacity to major system parameters. There are two regimes where this dependence is qualitatively different: in the large battery regime ($B_{max}>E$), the capacity is mainly determined by the average energy arrival rate and is (almost) independent of the exact value of the battery size; on the other hand, in the small battery regime ($B_{max}<E$), the capacity depends critically on the battery size and is increasing logarithmically with increasing $B_{max}$. The formula also suggests that choosing $B_{max}\approx E$ is sufficient to extract most of the capacity of the system (achieved at $B_{max}=\infty$).  %there can be only a very limited additional benefit in making the battery size even bigger. 
One can also observe that while in the large battery regime the only property of the energy harvesting profile that impacts  capacity is the average rate, in the small battery regime energy profiles that are less peaky over time lead to larger capacity. See Figure~\ref{fig: CapBound_2} which compares two energy profiles with the same average rate.

The main ingredient of the above result is a near optimal online strategy we develop for 
energy/power control over the Bernoulli energy harvesting channel. The problem of optimal energy/power allocation over an energy harvesting channel has been extensively studied in the literature. While the offline version is well-understood \cite{YangUlukus,TutuncuogluYener,OzelTutuncuoglu_JSAC}, the online version of the problem is known to be difficult \cite{OzelTutuncuoglu_JSAC,KhuzaniSaffar,MaoKoksalShroff,SrivastavaKoksal} and several works suggest simple online power control policies with and without performance guarantees. For example, \cite{SrivastavaKoksal} shows that a simple strategy that allocates constant energy, equal to the average energy arrival rate,  to each channel use as long the battery is not empty becomes asymptotically optimal as $B_{max}\to\infty$. However as we show in Section~\ref{sec: simulations}, this strategy can be arbitrarily away from optimality for finite values of $B_{max}$. In contrast, we show analytically that the online energy allocation policy we propose remains within $0.973$ bits of the optimal value for all values of the system parameters. This result can be of interest in its own right.

In the final section of the paper, we extend our approximation results to general i.i.d. energy arrival processes where each $E_t$ is i.i.d. according to some arbitrary distribution. We show that a simple modification allows to apply the near optimal energy allocation policy and the information-theoretic coding strategy we develop in the earlier sections for the Bernoulli process to general i.i.d. energy arrival processes. We show that for many distributions of $E_t$, this yields an approximate characterization of the capacity within a constant gap, though with a larger constant. However, we also illustrate that one can engineer specific distributions for which our approach would fail to provide a constant gap approximation. 

The constant gap approach we propose in this paper is most useful to understand the capacity of energy harvesting communication channels operating at moderate to high SNRs. This, for example, can be the operating regime of a base station in a rural area powered by renewable energy sources (ex. wind or solar). However, the insights obtained from such an analysis can be applicable even at low SNR. Note that the $2.58$ bits/s/Hz  is an analytical upper bound on the worst case gap over all SNR regimes and the actual gap between the rate achieved by the
strategies we propose and the true capacity of the system can be much smaller especially at low SNR. More generally, we believe the approximation philosophy we propose
in this paper will be useful in providing a basis for comparing the performances of different strategies,
developing insights into the capacity of the system and giving a sense of the remaining gap in
characterizing the problem, when obtaining an exact \emph{insightful} capacity formula proves to be difficult.
\begin{figure}[t!]
\centering
\scalebox{0.7}
{\includegraphics{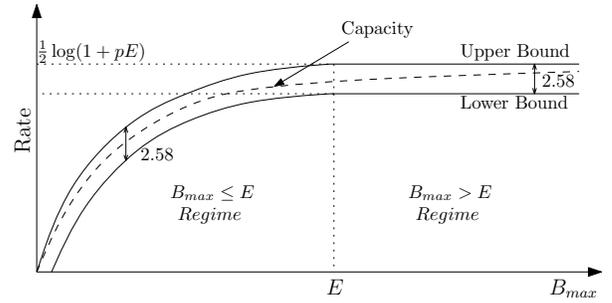}}
\caption{Illustration of the main result: constant gap upper and lower bounds on the capacity of the system. The plot shows how the capacity and its upper and lower bounds vary with $B_{max}$ for a fixed pair of $(p, E)$.}
\label{fig: CapBound}
\end{figure}

\begin{figure}[t!]
\centering
\scalebox{0.7}
{\includegraphics{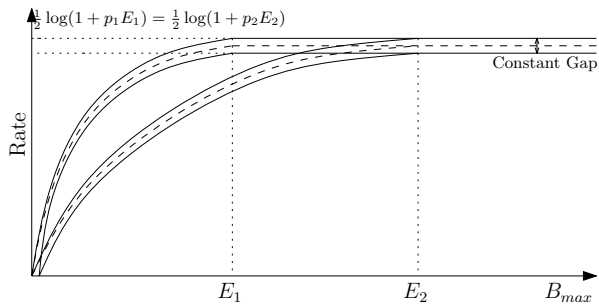}}
\caption{Comparison of the capacity for two energy harvesting profiles $(p_1, E_1)$ and $(p_2, E_2)$ with the same average rate $p_1E_1=p_2E_2$ ($p_1>p_2, E_1<E_2$).}
\label{fig: CapBound_2}
\end{figure}

\section{System Model}
\label{sec: SysModel}

We consider an AWGN communication system powered by an energy harvesting mechanism with limited battery (See Fig.~\ref{fig: model}). At each time step $t$, $E_t$ amount of energy is harvested from the exogenous energy source and is stored in the battery of size $B_{max}$. In the case the harvested energy $E_t$ exceeds the available space in the battery at time $t$, the battery is charged to maximum capacity and the remaining energy is discarded. Let $X_t$ denote the scalar real input to the channel at time $t$, $Y_t$ the  output of the channel and $N_t$ the additive noise with unit normal distribution $\mathcal{N}(0,1)$. We have
\begin{equation}
Y_t=X_t +N_t.
\end{equation}
Let $B_t$ represent the amount of energy available in the battery for transmission at time step $t$. Then the  transmitted signal $X_t$ is amplitude constrained by the available energy in the battery $B_t$. The system energy constraints can be summarized as follows:
\begin{align}
|X_t|^2 & \le B_t \label{eq: ampConst}\\
B_{t+1} & = \min(B_t + E_{t+1} - |X_t|^2, B_{max})  \label{eq: batteryUpdate}
\end{align}
Equation \eqref{eq: ampConst} represents the amplitude constraint on the input, and \eqref{eq: batteryUpdate} represents the update rule for the energy available in the battery. \footnote{Here, we require the harvested energy be stored in the battery before it can be used for transmission. An alternative energy storage model, considered in \cite{OzelUlukus_Bzero,MaoHassibi}, is to allow the harvested energy $E_t$ to be used instantaneously at time $t$ and the remaining amount be stored in the battery. % However, instantaneous energy consumption requires infinite instantaneous power, which can be unrealistic. 
The capacities under the two energy storage models are different in general.}

The harvested energy at each time step $E_t$ is a discrete-time stochastic process dictated by the energy harvesting mechanism. We assume that the energy arrival process  is causally known at the transmitter (i.e. at time $t$ the transmitter knows $\{E_t, E_{t-1}, \dots\}$), but not at the receiver. In the sequel, we first focus on a simple Bernoulli process (Sections~\ref{sec: Online_Policy} and \ref{sec: Info_Cap}), where $E_t$'s are i.i.d. Bernoulli random variables:
\begin{align}
\label{eq: Et}
E_t = \left\{ \begin{array}{rl}
E & \mbox{w.p. $p$} \\
0 & \mbox{w.p. $1-p$},
\end{array}  \right.
\end{align}
so at each time step, either an energy packet of size $E$ is harvested, or no energy is harvested at all, independent of the other time steps. We then extend our results to more general i.i.d. energy harvesting processes in Section~\ref{sec: General_EP}, where we assume that $E_t$ are i.i.d. with common cdf $F(x)$. 

The information-theoretic capacity of the channel is defined in the usual way as the largest rate at which the transmitter can reliably communicate to the receiver under the  system energy constraints in \eqref{eq: ampConst} and \eqref{eq: batteryUpdate} and the assumption that the realizations of $E_t$ are known only to the transmitter in a causal fashion. While the main focus of our paper is the information-theoretic capacity of this channel, in Section~\ref{sec: Online_Policy} we also study a related problem, optimal (online) energy allocation for maximizing the average rate (or utility) of an energy-harvesting system. This problem is defined more precisely in the corresponding section and forms the basis for the information-theoretic results we prove in Section~\ref{sec: Info_Cap}.

\section{Main Result}
\label{sec: main}

The following theorem is the main result of the paper. 
\begin{thm}[Main Result]
\label{thm: MainResult}
The capacity $C$ of the channel described in Section \ref{sec: SysModel} with Bernoulli energy arrival process $E_t$ satisfies

\begin{align*}
\frac{1}{2} \log  \left(1 + p \cdot \min\{  B_{max}, E\} \right)& - 2.58 \le  \; C \\
 \le \frac{1}{2} \log & \left( 1 + p \cdot \min\{B_{max}, E \} \right).
\end{align*}
\end{thm}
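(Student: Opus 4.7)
The plan is to establish the upper and lower bounds separately, with the lower bound carrying the bulk of the technical work since it requires an explicit construction.

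For the upper bound $C \le \frac{1}{2}\log(1+p\min\{B_{\max},E\})$, I would use a single genie argument combined with the Ozel--Ulukus infinite-battery capacity formula. Consider an auxiliary channel in which the energy arrivals are replaced by the truncated process $\tilde E_t := \min(E_t, B_{\max})$ and the battery is infinite. The original channel cannot store more than $B_{\max}$ from any single arrival (excess is discarded), so any coding scheme that works for the original channel also works for this relaxed one. Applying the Ozel--Ulukus result \cite{OzelUlukus_BInfty}, the capacity of this relaxed channel equals $\frac12\log(1+\E[\tilde E_t]) = \frac12\log(1+p\min\{B_{\max},E\})$, because $\tilde E_t$ is $\min(E,B_{\max})$ with probability $p$ and $0$ otherwise. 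This yields the upper bound.

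For the lower bound $C \ge \frac{1}{2}\log(1+p\min\{B_{\max},E\}) - 2.58$, the strategy is to combine the two main ingredients advertised in the introduction. First, the near-optimal online energy allocation policy (developed in Section \ref{sec: Online_Policy}) allocates, in each channel use, an energy $g_t$ with $g_t$ close to the target level $P^\star \approx p\min\{B_{\max},E\}$; this policy achieves the utility $\frac{1}{n}\sum_t \frac12\log(1+g_t)$ within $0.973$ bits of the optimal utility $\frac{1}{2}\log(1+P^\star)$. Second, I would overlay on top of this policy an information-theoretic coding scheme: use a Gaussian codebook of power $P^\star$, with the transmitted symbol at time $t$ scaled (or muted) according to the available battery/allocation. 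Over a long block, the allocated energies $g_t$ concentrate around their mean, and a standard AWGN random-coding argument (with appropriate handling of channel uses where $g_t < P^\star$ or the battery empties) shows that a rate of $\frac12\log(1+P^\star)$ minus a constant overhead is achievable, even without the receiver knowing the arrival process.

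The main obstacle will be the information-theoretic step, namely converting the utility-style guarantee for $\frac{1}{n}\sum_t \frac12\log(1+g_t)$ into a rigorous achievable-rate statement that the receiver can decode. Two issues must be handled jointly: (i) the \emph{Jensen gap} between $\E[\frac12\log(1+g_t)]$ and $\frac12\log(1+\E[g_t])$, since the upper bound is phrased in terms of the latter while the achievable utility naturally takes the former form; and (ii) the \emph{silent periods} when the battery is empty, which must not be signalled to the receiver yet must not confuse the decoder. I expect these to contribute roughly the remaining $2.58 - 0.973 \approx 1.6$ bits of slack, which will be absorbed by using sufficiently long coding blocks (so empirical averages concentrate), a Gaussian input scaled down to ensure that $|X_t|^2 \le B_t$ with high probability, and a truncation/save-and-transmit step to bound the rate penalty from outages uniformly in $(p,E,B_{\max})$. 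Combining the $0.973$-bit online-policy gap with this information-theoretic overhead will give the claimed $2.58$-bit constant gap, uniform in all three system parameters.
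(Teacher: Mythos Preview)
Your upper bound argument is correct and matches the paper's: truncate the arrivals at $B_{\max}$, enlarge the battery to infinity, and invoke the Ozel--Ulukus result.

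Your lower bound sketch, however, has genuine gaps that the paper handles quite differently.

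\textbf{The ``concentration'' claim is false.} The near-optimal policy allocates $\tilde g(j)=p(1-p)^j\min\{B_{\max},E\}$, which decays geometrically within each epoch. The allocated energies do \emph{not} concentrate around a constant; their empirical distribution is (asymptotically) supported on the entire geometric sequence $\{p(1-p)^j\min\{B_{\max},E\}\}_{j\ge 0}$. So the step ``over a long block, the allocated energies $g_t$ concentrate around their mean, and a standard AWGN random-coding argument \ldots'' does not go through. The paper instead uses a \emph{family} of amplitude-constrained codebooks $\mathcal C^{(0)},\mathcal C^{(1)},\dots$, one for each level $\mathcal E_j$, interleaved according to the epoch position (Theorem~\ref{thm: Achievable}); the achievable rate is then exactly the weighted sum $\sum_j p(1-p)^j\max_{|X|^2\le\mathcal E_j}I(X;Y)$.

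\textbf{Gaussian inputs cannot meet the hard amplitude constraint.} The energy constraint $|X_t|^2\le B_t$ is almost-sure, not in expectation or with high probability. A Gaussian codebook at power $P^\star$ violates it with positive probability at every channel use; ``scaling down to ensure $|X_t|^2\le B_t$ with high probability'' is not enough, and truncating after the fact destroys the i.i.d.\ structure needed for the AWGN coding theorem. The paper sidesteps this by taking $p(x)$ uniform on $[-\sqrt{\mathcal E_j},\sqrt{\mathcal E_j}]$ and paying a quantified $1.04$-bit penalty (Lemma~\ref{lem: Gap_Uniform}).

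\textbf{Decoding without receiver knowledge of $\{E_t\}$ is the crux, and you have no mechanism for it.} If the transmitter scales symbol $t$ by a factor depending on $g_t$ and the receiver does not know $g_t$, the effective channel has an unknown multiplicative gain; the AWGN coding theorem does not apply. The paper's route is: (i) first establish achievability \emph{with} the arrivals known at the receiver (so the receiver can de-interleave into the separate codebooks), and then (ii) invoke a general result (Theorem~\ref{thm:CapCSIR}/\ref{thm: Achievable2}) that removing receiver side information costs at most its entropy rate, here $H(p)\le 1$. This $H(p)$ term is one of the explicit contributors to the $2.58$-bit gap.

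Consequently your bookkeeping $2.58\approx 0.973+1.6$ is not how the constant arises. The paper's accounting is, roughly, the $\tfrac{1-p}{2p}\log\tfrac{1}{1-p}$ term from the exponential policy, plus $1.04$ from the amplitude-constrained input, plus $H(p)$ from removing receiver CSI, optimized jointly over the threshold on $pB_{\max}$ (Theorem~\ref{thm: ConstantGapBlessE}).
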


The bound in Theorem \ref{thm: MainResult} is illustrated in Figure \ref{fig: CapBound}. The result shows that the capacity of a finite battery energy harvesting system is within 2.58 bits of $\frac{1}{2} \log \left( 1 + p \cdot \min\{B_{max}, E \} \right)$ for any choice of the system parameters $p, E$ and $B_{max}$. We therefore refer to $\frac{1}{2} \log \left( 1 + p \cdot \min\{B_{max}, E \} \right)$ as the approximate capacity of the channel. Figure \ref{fig: CapBound_2} compares the approximate capacity under two different Bernoulli energy harvesting profiles. 

The main ingredient of the above theorem is a near optimal online energy allocation policy we develop for the Bernoulli energy harvesting channel in Section~\ref{sec: Online_Policy}. We show that this strategy is within $0.973$ bits of optimality for all values of the system parameters. We state the corresponding results in Theorems~\ref{thm: const_gap_rate} and \ref{thm: const_gap_rate2}, after we precisely define the online energy allocation problem in Section~\ref{sec: Online_Policy}. Finally, we discuss how our results from Sections \ref{sec: Online_Policy} and \ref{sec: Info_Cap} for the Bernoulli process can be extended to more general i.i.d. energy harvesting profiles in Section \ref{sec: General_EP}.

\section{A Near Optimal Online Policy for Energy Control}
\label{sec: Online_Policy}

In this section, we study an online energy allocation problem for the energy harvesting communication system with finite battery described in Section \ref{sec: SysModel}. The near optimal energy control policy we develop in this section turns out to be a critical ingredient of the approximate capacity characterization we develop in the next section.

We consider the energy harvesting transmitter described in Section \ref{sec: SysModel} and assume that we are specified an energy-rate (or energy-utility) function $r(g)$ which specifies the rate or utility $r(g(t))$ obtained at a given channel use $t$ as a function of the energy $g(t)$ allocated for transmission at time $t$.\footnote{The continuous-time version of this problem has been considered in many works in the literature \cite{YangUlukus,TutuncuogluYener,OzelTutuncuoglu_JSAC,KhuzaniSaffar,MaoKoksalShroff,SrivastavaKoksal} where it is more commonly referred to as the power control problem and the function $r(g)$ is called the power-rate or the power-utility function in this case. Note that in our discrete-time setup power corresponds to energy per channel use, and therefore the terms energy (per channel use) and power can be used interchangeably. We prefer to refer to the problem as the energy allocation problem.} We assume the energy arrival process is a Bernoulli  process described by \eqref{eq: Et}, and again only online, i.e. causal, information of the energy packet arrivals is available at the transmitter.  An online policy $g(t)$ denotes the amount of energy transmitter decides to allocate for transmission at time $t$. We call $g(t)$ a feasible online policy, if $g(t)$ satisfies:
\begin{align}
& 0 \le g(t)  \le B_t \label{eq: allocConst}\\
& B_{t+1}  =  \min(B_t + E_{t+1} -g( t) , B_{max})  \label{eq: batteryUpdate2}
\end{align}
Notice that constraints \eqref{eq: allocConst} and \eqref{eq: batteryUpdate2} are analogous to the system energy constraints \eqref{eq: ampConst} and \eqref{eq: batteryUpdate} from Section \ref{sec: SysModel}, except $|X_t|^2$, the amount of energy used for transmission at time step $t$, from Section \ref{sec: SysModel} is replaced with $g(t)$ here. Moreover, the energy allocated at time $t$ can only depend on the past realizations of the energy harvesting process, i.e. we have the causality constraint
\begin{equation}
\label{eq:causality}
g(t)=f(t,\{E_{i}\}_{i=0}^t).
\end{equation}
Since the energy arrival process $E_t$ is a stochastic process, the quantity $g(t)$ is random. Let $\mathcal{G}$ denote the class of online policies satisfying constraints \eqref{eq: allocConst}, \eqref{eq: batteryUpdate2} and \eqref{eq:causality}. Then, our goal is to maximize the long term average rate over the class of feasible online policies:

\begin{align}
 \label{eq: obj}
\max_{g \in \mathcal{G}} \liminf_{N \to \infty}  \mathbb{E}\left[ \frac{1}{N} \displaystyle\sum\limits_{t = 1}^N r(g( t)) \right],
\end{align}
where the expectation here is over the ensembles of $\{E_t\}_{t=0}^{N}$ and $r(\cdot)$ is the energy-rate function of interest. Since our main motivation for studying this problem is to use the solution to approximate the capacity of the AWGN  channel in the next section, we restrict our attention to the following classical AWGN rate function:
\begin{align}
\label{eq: AWGN_rate}
r(x) = \frac{1}{2} \log\left(1 + x \right)
\end{align}
in bits per channel use. Therefore, the problem of interest here is:

\begin{align}
 \label{eq: AWGN_cap_obj}
\max_{g \in \mathcal{G}} & \liminf_{N \to \infty}  \mathbb{E} \left[ \frac{1}{N} \displaystyle\sum\limits_{t = 1}^N \frac{1}{2} \log\left( 1 + g( t)\right) \right]. 
\end{align}
\subsection{The case $B_{max} \le E$} 
\label{subsec: rate_B_less_E}

We first analyze the case where $B_{max} \le E$. In this case, according to our system definition, every time a non-zero energy packet arrives, the battery is charged to full, and the left over energy is wasted. Since, the system is reset to the initial state of full battery each time a non-zero energy packet arrives, each epoch (the period of time instances between two adjacent non-zero packet arrivals) is independent and statistically indistinguishable from every other epoch. Motivated by this observation, we propose to use a strategy where the energy $g(t)$ allocated to transmission at time $t$ depends only on the number of channel uses since the last time the battery was recharged, i.e., $g(t) = \tilde{g}(j)$ where $j = t - \max\{t'\le t: E_{t'} = E \}$, for a function 
$\tilde{g}(j)$ that satisfies
\begin{align}
\label{eq: constraint}
\displaystyle\sum\limits_{j=0}^\infty \tilde{g}(j) \le B_{max} \;\;\;\;  \text{and} \;\;\;\ \tilde{g}(j) \ge 0 \;\;\; \forall j. 
\end{align}
Note that an energy allocation policy that satisfies the above properties clearly satisfies the feasibility constraints \eqref{eq: allocConst} and \eqref{eq: batteryUpdate2}. Moreover, it  uses only information about the past realizations of the process. We choose $\tilde{g}(j) = p(1-p)^j B_{max}$, which clearly satisfies \eqref{eq: constraint}, and show in the following theorem that it achieves an objective value that's no more than $0.973$ bits away from the optimum  value of the optimization problem given in \eqref{eq: AWGN_cap_obj} for any value of $p$ and  $B_{max}$. 

\begin{thm}
\label{thm: const_gap_rate}
Let $g^\prime(t) = \tilde{g}(j)$ where $j = t - \max\{t'\le t: E_{t'} = E \}$ and $\tilde{g}(j) =  p(1-p)^j B_{max}$ for $j=0,1,2,...$. When $B_{max}\leq E$,  we have the following guarantee:
\begin{align}
\liminf_{N \to \infty}&\; \mathbb{E} \left[ \frac{1}{N} \displaystyle\sum\limits_{t = 1}^N \frac{1}{2} \log\left( 1 + g^\prime(t)\right) \right] \notag \\
 & \ge  \max_{g \in \mathcal{G}}  \liminf_{N \to \infty} \mathbb{E} \left[  \frac{1}{N} \displaystyle\sum\limits_{t = 1}^N \frac{1}{2} \log\left( 1 + g( t)\right) \right] - 0.973. \label{eq: const_gap_rate}
\end{align}
\end{thm}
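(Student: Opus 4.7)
My plan has three ingredients: a closed-form computation of the long-run rate achieved by $g'$, a clean upper bound on the optimum over $\mathcal{G}$, and a uniform bound on the gap between the two. For the first, I would exploit the regenerative structure created by the assumption $B_{max}\le E$. Every nonzero arrival $E_t=E$ fills the battery to exactly $B_{max}$ (any excess is wasted), so the intervals between consecutive arrivals form i.i.d.\ epochs, each geometric with parameter $p$. A simple induction shows that $j$ steps into an epoch the battery content is $(1-p)^j B_{max}$, which exceeds the allocation $\tilde g(j)=p(1-p)^j B_{max}$, so the policy is feasible. Invoking the renewal-reward theorem (and interchanging the sum over $j<L$ with the expectation over the epoch length $L\sim\mathrm{Geom}(p)$) gives the closed form
\[
R_{g'} \;=\; \frac{p}{2}\sum_{j=0}^{\infty}(1-p)^j \log\!\left(1 + p(1-p)^j B_{max}\right).
\]

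For the second ingredient, I would combine energy conservation with Jensen's inequality. Between two consecutive arrivals, no energy enters the battery, so the total energy any feasible $g\in\mathcal{G}$ allocates inside such an epoch is at most $B_{max}$. Summing over the epochs in $[1,N]$ and taking expectations yields $\limsup_N \tfrac{1}{N}\sum_{t=1}^N \mathbb{E}[g(t)] \le pB_{max}$. Applying Jensen's inequality once to pull the expectation inside the concave $\tfrac{1}{2}\log(1+\cdot)$ and once to pull the time average inside then gives the upper bound $\tfrac{1}{2}\log(1+pB_{max})$ on the optimum of \eqref{eq: AWGN_cap_obj}.

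For the third ingredient, set $a:=pB_{max}$. Using the geometric identity $\sum_{j\ge 0}p(1-p)^j=1$, the gap rewrites as
\[
\Delta(p,a) \;=\; \frac{p}{2}\sum_{j=0}^{\infty}(1-p)^j \log\frac{1+a}{1+a(1-p)^j}.
\]
The key trick is the elementary inequality $1+a(1-p)^j\ge (1-p)^j(1+a)$ (valid because the difference equals $1-(1-p)^j\ge 0$), which eliminates $a$ termwise: it yields $\log\tfrac{1+a}{1+a(1-p)^j}\le j\log\tfrac{1}{1-p}$, so summing the geometric moment gives the $a$-free bound
\[
\Delta(p,a) \;\le\; \frac{(1-p)\log\tfrac{1}{1-p}}{2p}.
\]
This is a one-variable quantity whose supremum over $p\in(0,1]$ is attained as $p\to 0^+$ and can be checked to lie comfortably within the claimed $0.973$ bits. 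This last step is the main obstacle: naive termwise estimates (for instance, $\log\tfrac{1+a}{1+a(1-p)^j}\le \log\tfrac{1+a}{a(1-p)^j}$, which is only useful when $a(1-p)^j\ge 1$) force a case split at the index where $a(1-p)^j$ crosses $1$ and produce constants that grow with $\log a$. Finding the single inequality that collapses the double-parameter gap into a one-dimensional optimization over $p$ alone is where the real substance of the argument sits.
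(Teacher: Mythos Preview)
Your proposal is correct, and the first two ingredients---the renewal-reward computation of $R_{g'}$ and the Jensen/energy-conservation upper bound $\tfrac{1}{2}\log(1+pB_{max})$---match the paper's Steps One and Two essentially line for line. The third ingredient, however, is genuinely different and in fact cleaner than the paper's Step Three. The paper bounds the gap by first splitting into two cases at the threshold $pB_{max}=2.853$: when $pB_{max}\le 2.853$ the upper bound itself is already at most $0.973$; when $pB_{max}>2.853$ the paper drops the $1$ inside the achieved log, i.e.\ uses $\log(1+p(1-p)^jB_{max})\ge \log(p(1-p)^jB_{max})$, which after summation leaves the residual term $\tfrac{1}{2\ln 2}\cdot\tfrac{1}{pB_{max}}$ in addition to $\tfrac{1-p}{2p}\log\tfrac{1}{1-p}$, and the threshold is then chosen so the two pieces add to $0.253+0.72=0.973$. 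Your single inequality $1+a(1-p)^j\ge (1-p)^j(1+a)$ eliminates $a$ outright, yielding the $a$-free bound $\tfrac{1-p}{2p}\log\tfrac{1}{1-p}$ with no case split; its supremum over $p\in(0,1)$ is $\tfrac{1}{2\ln 2}\approx 0.7213$, which is strictly better than the paper's $0.973$. So your route buys both simplicity (one inequality instead of a threshold argument) and a sharper constant, at no cost in rigor.
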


\begin{proof}
The detailed proof of the theorem is deferred to the Appendix. The proof follows roughly three major steps. 
\begin{itemize}
\item{Using Jensen's inequality, we first show that 
\begin{align}
\max_{g \in \mathcal{G}} \liminf_{N \to \infty} \mathbb{E} & \left[ \frac{1}{N}  \displaystyle\sum\limits_{t = 1}^N \frac{1}{2} \log\left( 1 + g( t) \right) \right]  \notag\\
& \leq \frac{1}{2}\log(1+pB_{max})\label{eq: policy_upper}
\end{align}
}
\item{Using the fact that $g^\prime(t)$ is same across different epochs, we turn our achievable rate (left-hand-side of \eqref{eq: const_gap_rate}) into the following expression as a function of $\tilde{g}(j)$:
\begin{align}
\label{eq: ach}
\displaystyle\sum\limits_{j = 0}^{\infty}p &(1-p)^j  \frac{1}{2} \log(1 + \tilde{g}(j)) \notag \\
& = \displaystyle\sum\limits_{j = 0}^{\infty} p(1-p)^j \frac{1}{2} \log(1 + p(1-p)^j B_{max}) 
\end{align}
}
\item{Finally, we upper bound the gap between the objective value achieved with $g'(t)$, i.e. Equation \eqref{eq: ach}, and $\frac{1}{2}\log(1+pB_{max})$ by a constant.}
\end{itemize}
\end{proof}

\begin{figure}[t]
\centering
\scalebox{0.53}
{\includegraphics{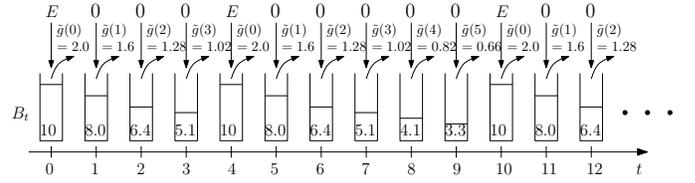}}
\caption{Illustration of the energy allocation policy in Theorem \ref{thm: const_gap_rate}. The parameters in this example are $p = 0.2$, $B_{max} = 10$, $E \ge B_{max}$, so our policy allocates $\tilde{g}(j) = 2\cdot 0.8^j$ for $j = 0,1,2, ...$ Note that the energy control policy is reset to $j=0$ each time a packet arrives.}
\label{fig: EnergyAlloc}
\end{figure}

Fig.~\ref{fig: EnergyAlloc} illustrates the energy allocation policy in Theorem ~\ref{thm: const_gap_rate}. In this allocation policy we use $p$ fraction of the remaining energy in the battery at each time (so the energy in the battery decays like $B_{t} = (1-p)^j B_{max}$). The motivation for this energy allocation policy is the following: for the Bernoulli arrival process $E_t$, the inter-arrival time is a Geometric random variable with parameter $p$. We know that the Geometric random variable is memoryless and has mean $1/p$. Therefore, at each time step, the expected number of time steps to the next energy arrival is $1/p$. Furthermore, since $\log(\cdot)$ is a concave function, results from \cite{YangUlukus, TutuncuogluYener} tell us that in order to achieve higher rate, we want to allocate the energy as uniform as possible between energy arrivals, i.e. if the current energy level in the battery is $B_t$ and we knew that the next recharge of the battery would be in exactly $m$ channel uses, we would allocate $B_t/m$ energy to each of the next $m$ channel uses. For the online case of interest here, we do not know when the next energy arrival would be. Instead, we use  the expected time to the next energy arrival as a basis: since at each time step, the expected time to the next energy arrival is $1/p$, we use a fraction $p$ of the currently available energy. 

Some simple online policies with and without performance guarantees have been earlier proposed in \cite{OzelTutuncuoglu_JSAC,MaoKoksalShroff,SrivastavaKoksal}. None of these strategies utilize the idea of exponential energy usage we propose here and can achieve the optimal rate within a constant gap uniformly over all parameter ranges. In Section~\ref{sec: simulations} below, we provide simulations which illustrate that these strategies can be arbitrarily away from optimality. However, before that we first address the remaining case of $B_{max}> E$.  

\subsection{The case $B_{max} > E$} 
\label{subsec: rate_B_greater_E}

Note that when $B_{max} \leq E$, because the energy must be stored into the battery before it can be used, the extra energy is wasted, and the system $B_{max} \leq E$ is equivalent to a system where the energy packet size is exactly equal to $B_{max}$. Therefore, the average optimal rate we can achieve with online energy management strategies is independent of $E$, and  Theorem~\ref{thm: const_gap_rate} characterizes this rate as approximately $\frac{1}{2}\log(1+pB_{max})$. In the case $B_{max} > E$, we show that the average optimal rate is given by $\frac{1}{2}\log(1+pE)$ which can be achieved by a simple modification of the energy control policy proposed in the earlier section. We have the following theorem.

\begin{thm}
\label{thm: const_gap_rate2}
Let $g^\prime(t) = \tilde{g}(j)$ where $j = t - \max\{t'\le t: E_{t'} = E \}$ and $\tilde{g}(j) =  p(1-p)^j E$ for $j=0,1,2,...$. When $B_{max}> E$, we have the following guarantee:
\begin{align}
\liminf_{N \to \infty}&\; \mathbb{E} \left[ \frac{1}{N} \displaystyle\sum\limits_{t = 1}^N \frac{1}{2} \log\left( 1 + g^\prime(t)\right) \right] \notag \\
 & \ge  \max_{g \in \mathcal{G}}  \liminf_{N \to \infty} \mathbb{E} \left[ \frac{1}{N} \displaystyle\sum\limits_{t = 1}^N \frac{1}{2} \log\left( 1 + g( t)\right) \right] - 0.973. \label{eq: const_gap_rate_2}
\end{align}
\end{thm}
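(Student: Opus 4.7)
The plan is to follow the exact three-step template used to prove Theorem~\ref{thm: const_gap_rate}, with the packet size $E$ now taking the role that $B_{max}$ played there. This substitution is natural because, when $B_{max}>E$, each arriving packet fits entirely into the battery, so the effective per-epoch energy budget is capped by $E$ rather than by the battery size. Correspondingly, the new envelope $\tilde g(j)=p(1-p)^j E$ still sums to $E\le B_{max}$ and therefore stays within the physical battery, while the renewal structure across epochs is preserved.

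I would begin by verifying that $g'$ is a feasible online policy. Right after an arrival at some time $t$, the battery satisfies $B_t\ge E$: since $B_{t-1}-g'(t-1)\ge 0$ and $E_t=E$, and because $B_{max}>E$, the $\min$ in \eqref{eq: batteryUpdate2} is inactive at this level. An easy induction on the within-epoch index $j$ then yields $B_{t+j}\ge (1-p)^j E$, which is precisely what is needed to afford $\tilde g(j)=p(1-p)^j E$. Hence \eqref{eq: allocConst} holds at every step, regardless of any residual energy accumulated from earlier epochs.

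Next I would establish the converse upper bound $\tfrac12\log(1+pE)$ on \eqref{eq: AWGN_cap_obj}. Telescoping the battery update \eqref{eq: batteryUpdate2} and using $\min(x,B_{max})\le x$ gives $\sum_{t=1}^{N}g(t)\le B_{max}+\sum_{t=1}^{N}E_t$, so $\tfrac1N\sum_{t=1}^N\mathbb{E}[g(t)]\le pE+O(1/N)$. Concavity of $\tfrac12\log(1+\cdot)$ together with Jensen's inequality then produces the desired bound, exactly in parallel with \eqref{eq: policy_upper}. For the achievable side, the i.i.d.\ structure of epochs (whose lengths are $\mathrm{Geom}(p)$) together with the renewal-reward theorem reduces the long-run average rate of $g'$ to $\sum_{j=0}^\infty p(1-p)^j\cdot \tfrac12\log(1+p(1-p)^j E)$. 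This is identical in form to \eqref{eq: ach} after the substitution $B_{max}\leftarrow E$, so bounding its gap to $\tfrac12\log(1+pE)$ by $0.973$ bits follows by invoking the third bullet of the proof of Theorem~\ref{thm: const_gap_rate} verbatim.

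The main point requiring genuine care, and the only real departure from the earlier argument, is the feasibility check: unlike in the $B_{max}\le E$ regime, an incoming packet no longer resets the battery to a uniform level, and residual energy from incomplete epochs can accumulate. The inductive argument above shows that such accumulation is only helpful, because the minimal stockpile $(1-p)^j E$ needed inside an epoch is already guaranteed by the single arrival that begins the epoch; any surplus only slackens \eqref{eq: allocConst}. Once feasibility is secured, every remaining ingredient is inherited from Theorem~\ref{thm: const_gap_rate}.
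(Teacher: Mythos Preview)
Your proposal is correct and follows essentially the same three-step template as the paper's proof: the Jensen/LLN upper bound $\tfrac12\log(1+pE)$, the renewal-reward reduction of the achieved rate to $\sum_j p(1-p)^j \tfrac12\log(1+p(1-p)^jE)$, and the constant-gap computation, all obtained from Theorem~\ref{thm: const_gap_rate} by the substitution $B_{max}\leftarrow E$. Your inductive feasibility check is a bit more explicit than the paper's (which simply notes $\sum_j \tilde g(j)\le E$ and that the $B_{max}=E$ policy remains feasible for larger batteries), but the underlying argument is the same.
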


Note that in the case $B_{max}> E$, each time an energy packet arrives $B_t$, the available energy in the battery, becomes at least as large as $E$. Based on this observation, our strategy utilizes a fraction $p$ of $E$ when $j=0$, i.e. if the energy packet has just arrived in the current channel use; a fraction $p$ of the remaining $(1-p)E$ in the next channel use $j=1$, i.e. if the energy has arrived in the previous channel use and not the current one, etc. It is easy to verify that since 
\begin{align}
\label{eq:needeqnumber}
\displaystyle\sum\limits_{j=0}^\infty \tilde{g}(j) \le E \;\;\;\;  \text{and} \;\;\;\ \tilde{g}(j) \ge 0 \;\;\; \forall j,
\end{align} 
this strategy satisfies the energy  feasibility constraints in \eqref{eq: allocConst} and \eqref{eq: batteryUpdate2} when $B_{max}> E$. Indeed, this energy management policy is quite conservative and can be clearly wasteful of resources. Consider the first epoch which starts with the arrival of the first energy packet: the remaining energy in the battery $j$-channel uses after the first energy packet arrives is given by $(1-p)^j E$, so at the time the second  packet arrives, there will be some residual energy left in the battery, at least part of which will add up to the arriving energy packet $E$ since $B_{max}>E$. The strategy we propose ignores this residual energy. An equivalent way of thinking about our strategy is that it operates as if $B_{max}=E$ even though $B_{max}>E$. However this strategy still turns out to be within constant number of bits of the optimal value. One immediate way to improve this strategy would be to use, at each time step, a fraction $p$ of the currently available energy in the battery $B_t$. However, this improved strategy turns out to be difficult to analyze analytically. In the next section, we present simulation results which demonstrate the improvement due to this modification.

Theorem~\ref{thm: const_gap_rate2} can be proved by using similar lines to the proof of Theorem~\ref{thm: const_gap_rate}. In the appendix, again based on Jensen's inequality, we show that 
\begin{align*}
\max_{g \in \mathcal{G}}  \liminf_{N \to \infty} \mathbb{E} \left[ \frac{1}{N} \displaystyle\sum\limits_{t = 1}^N \frac{1}{2}  \log\left( 1 + g( t)\right) \right]
 \leq \frac{1}{2}\log(1+pE)
\end{align*}
in this case. The remaining step is to show is that the strategy we propose in Theorem~\ref{thm: const_gap_rate2} is within $0.973$ bits of this upper bound. While this can be shown from first principles by following the steps of Theorem~\ref{thm: const_gap_rate}, it can also be directly observed from the proof of  Theorem~\ref{thm: const_gap_rate}: Fix the energy packet size $E$. When the battery size is as large as $E$, i.e. $B_{max}=E$, the strategy in Theorem~\ref{thm: const_gap_rate} reduces to the strategy in Theorem~\ref{thm: const_gap_rate2}. Moreover, the proof of  Theorem~\ref{thm: const_gap_rate} establishes that this strategy achieves $ \frac{1}{2}\log(1+pE)$ within $0.973$ bits. Now in a system with same $E$ but larger $B_{max}$, so that $B_{max}>E$, the same strategy is still feasible and will clearly achieve the same rate. This argument shows that when $B_{max}> E$, the strategy proposed in Theorem~\ref{thm: const_gap_rate2} achieves an average rate at least as large as $ \frac{1}{2}\log(1+pE)-0.973$ bits/channel use.  

\subsection{Numerical Evaluations} 
\label{sec: simulations}

\begin{figure*}[ht]
 \centering
 \subfigure[$B_{max} \le E$]{
  \includegraphics[scale=0.40]{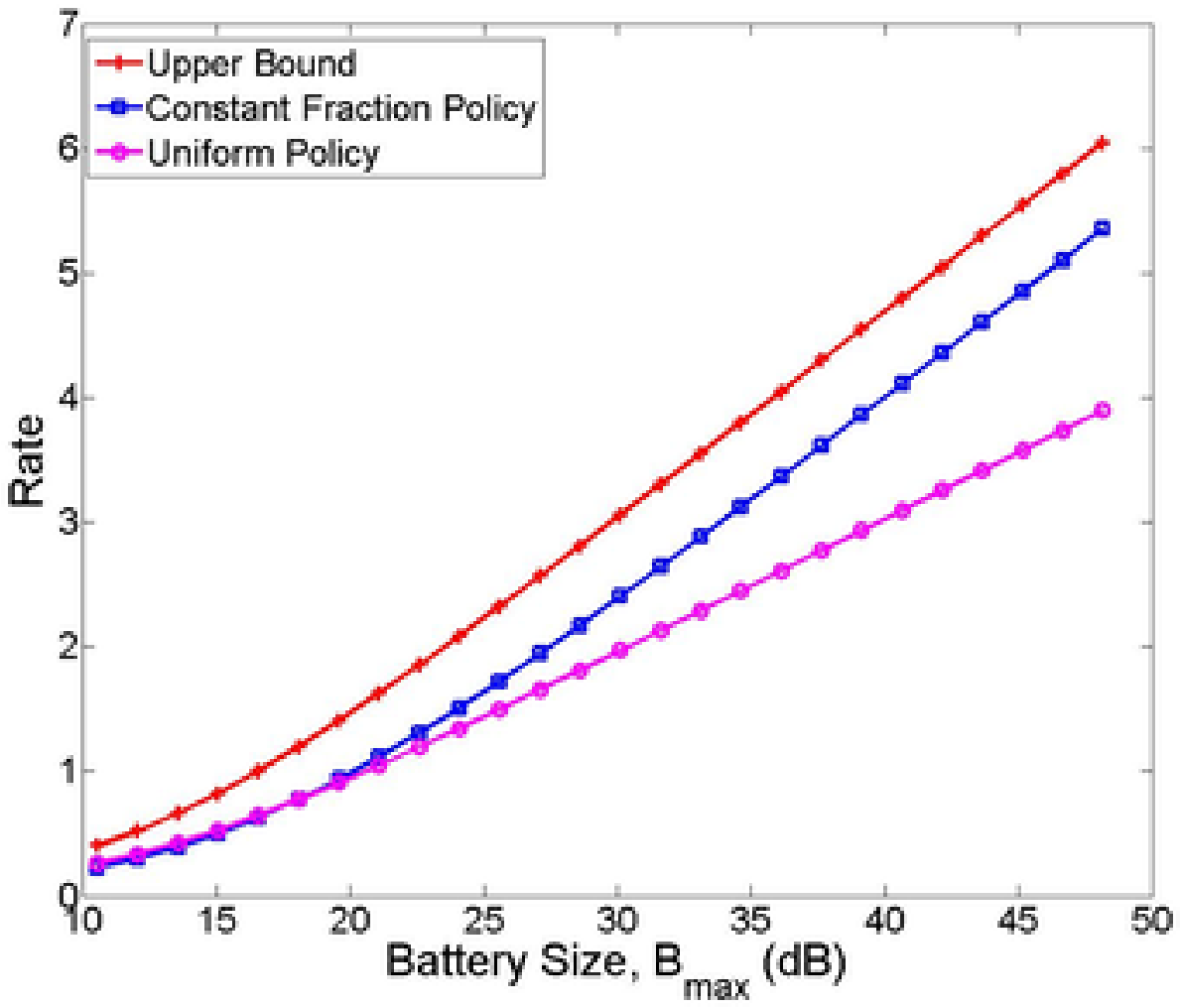}
   \label{fig: B_le_E_15}
   }
 \subfigure[$B_{max} = 2E$]{
  \includegraphics[scale=0.40]{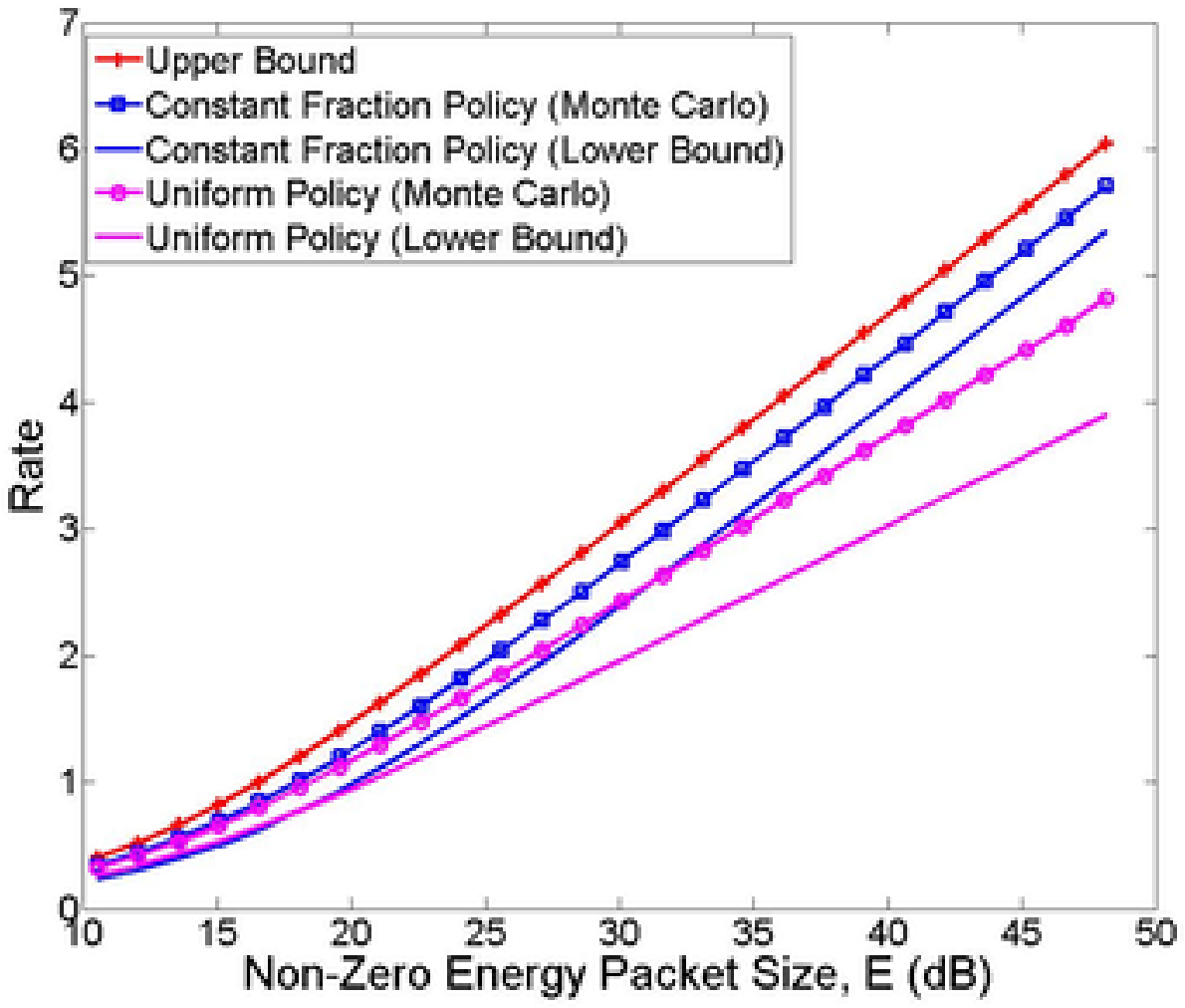}
   \label{fig: B_2E_15}
   }
 \subfigure[$B_{max} = 8E$]{
  \includegraphics[scale=0.40]{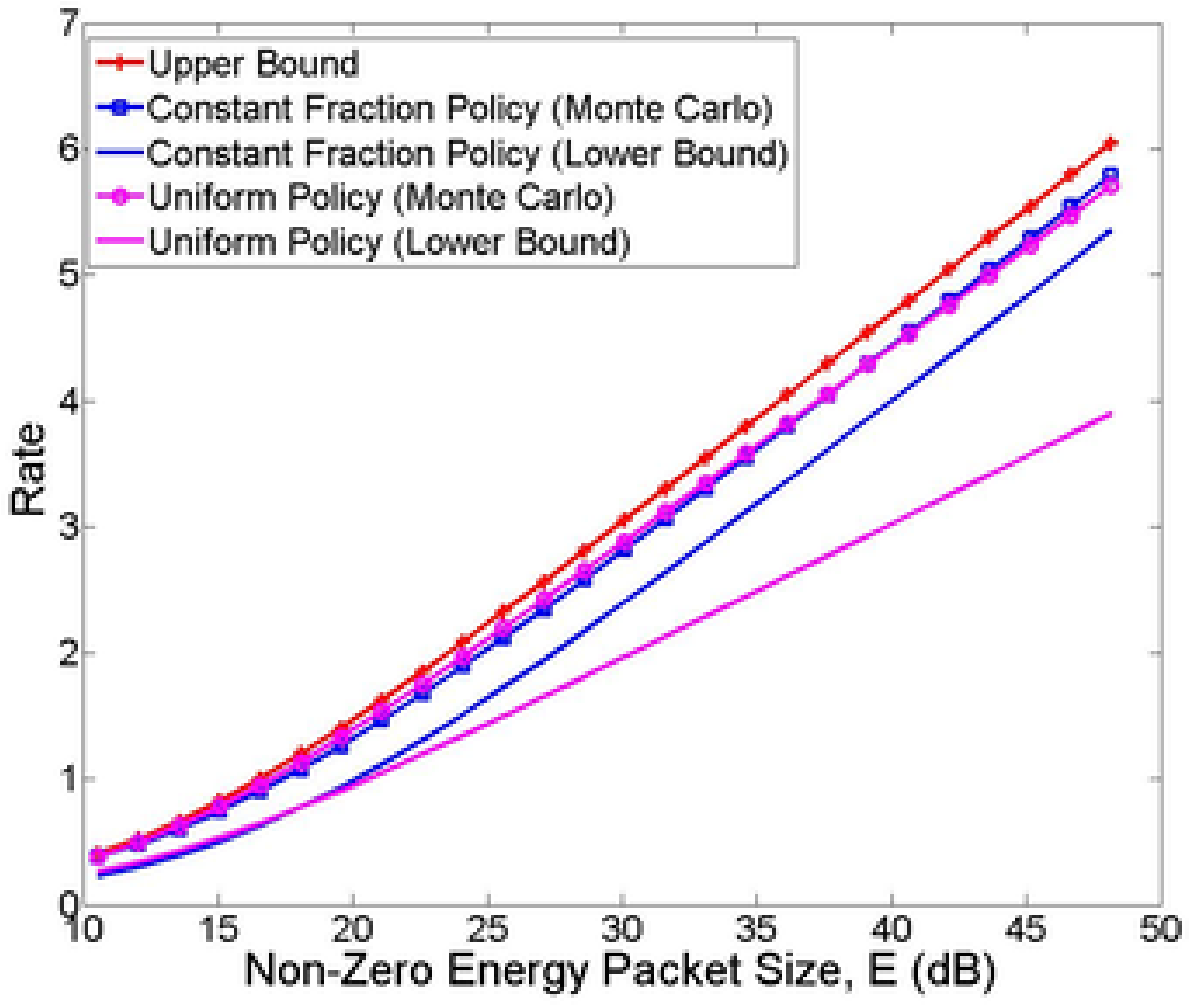}
     \label{fig: B_8E_15}
   }
 \caption{Numerical evaluations for a system described in Section \ref{sec: SysModel} with energy arrival probability $p = 1/15$.}
  \label{fig: p15}
\end{figure*}

\begin{figure*}[ht]
 \centering
 \subfigure[$B_{max} \le E$]{
  \includegraphics[scale=0.40]{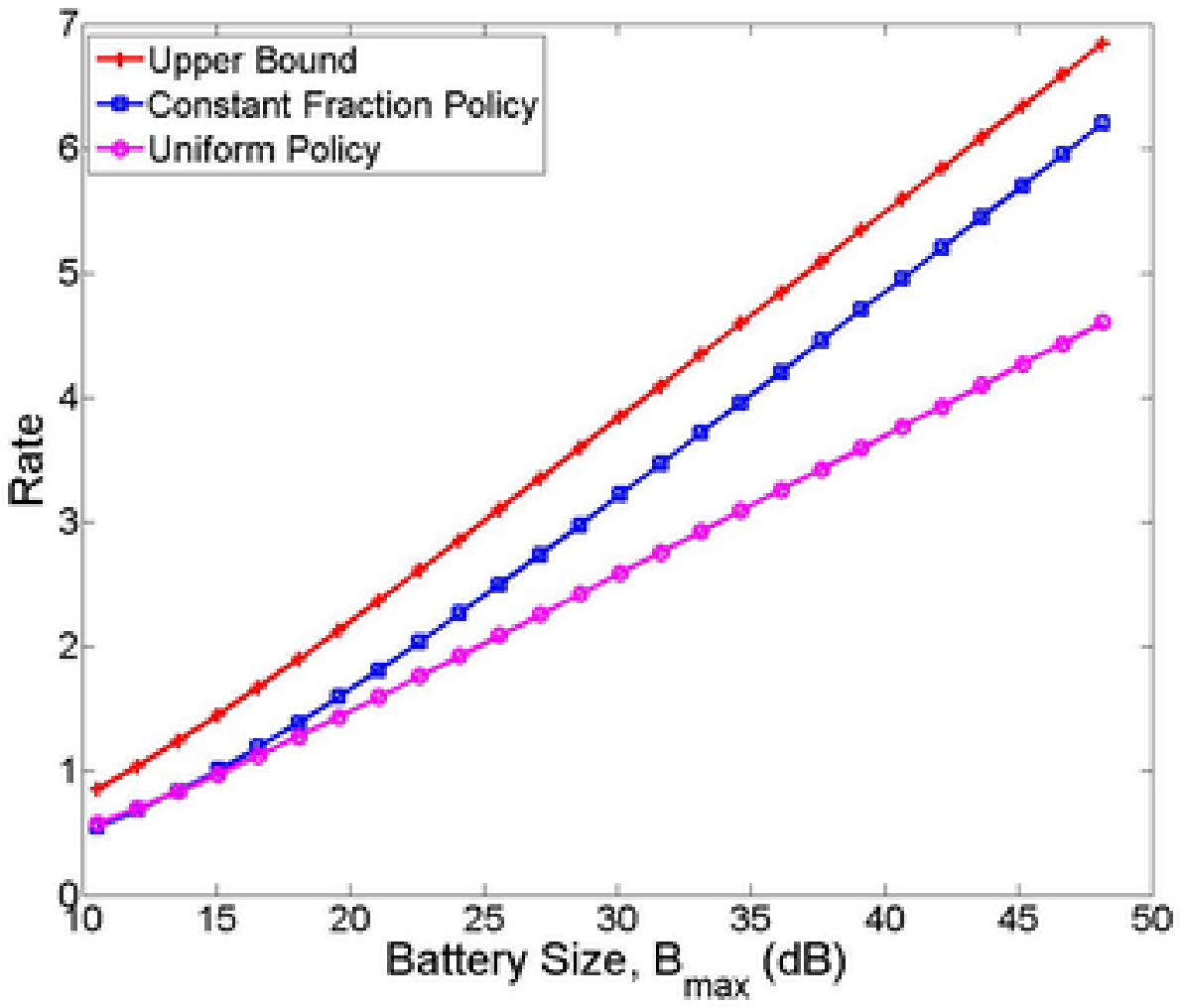}
   \label{fig: B_le_E_5}
   }
 \subfigure[$B_{max} = 2E$]{
  \includegraphics[scale=0.40]{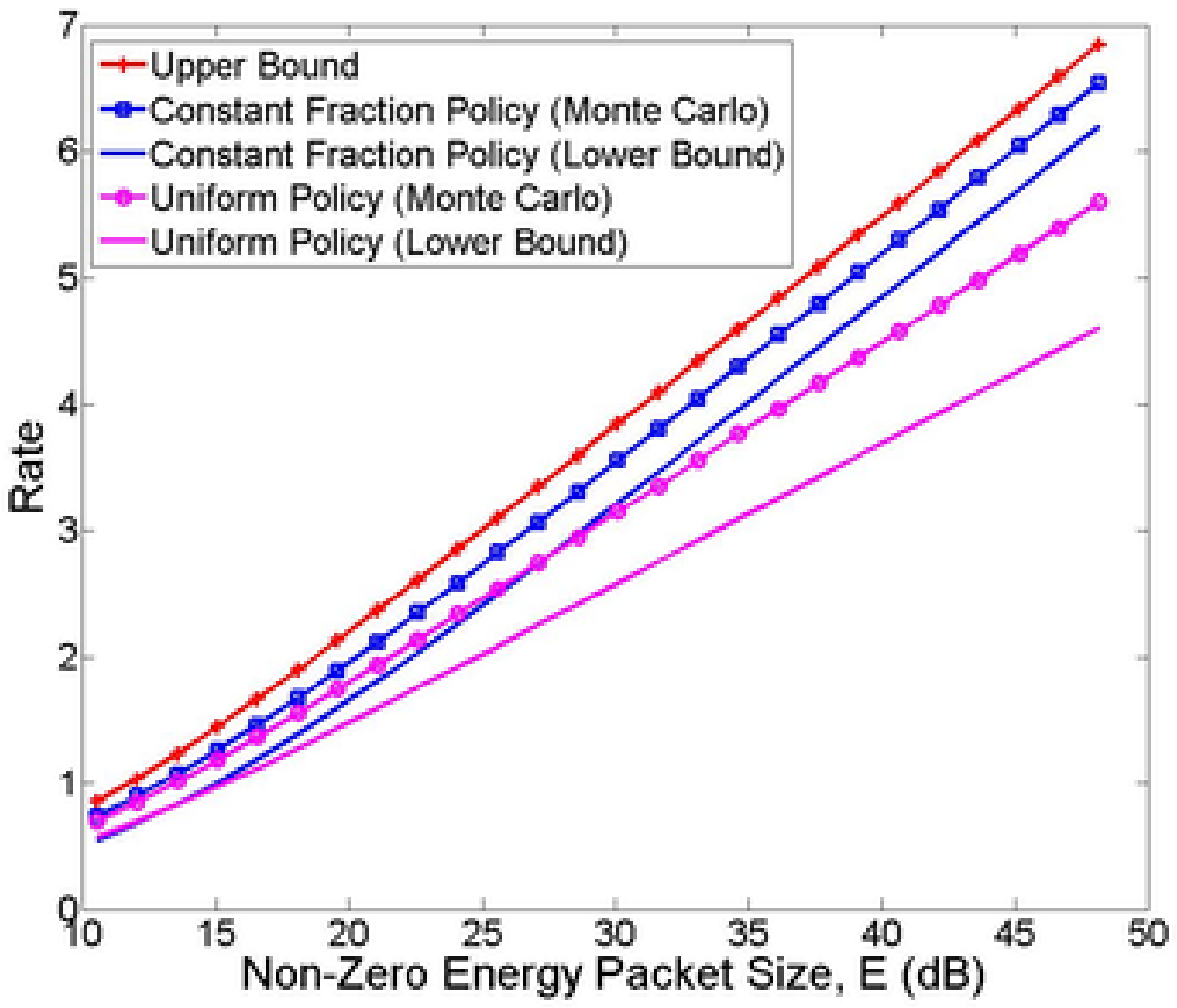}
   \label{fig: B_2E_5}
   }
 \subfigure[$B_{max} = 8E$]{
  \includegraphics[scale=0.40]{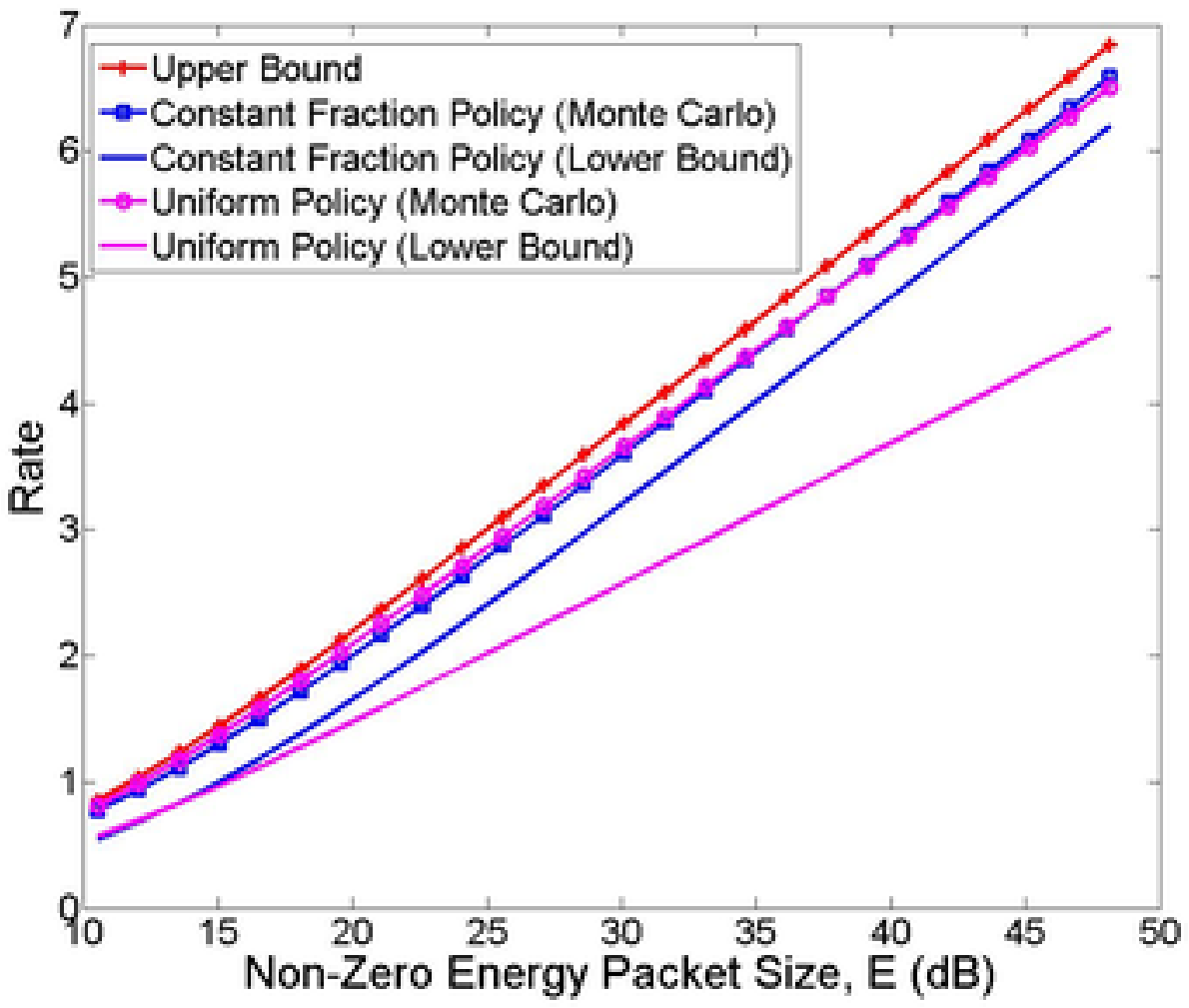}
     \label{fig: B_8E_5}
   }
 \caption{Numerical evaluations for a system described in Section \ref{sec: SysModel} with energy arrival probability $p = 1/5$.}
   \label{fig: p5}
\end{figure*}

In this section, we provide simulation results that compare the performance of the near optimal energy control policy we proposed in the earlier section, which we refer to as the {\it Constant Fraction Policy} in the current section, to some other simple energy/power control policies that have been proposed in the literature for the same problem. In particular, \cite{OzelTutuncuoglu_JSAC} proposes a {\it Constant Water Level Policy}, which we refer to as the {\it Uniform Policy} in this section, which allocates a constant amount of energy, equal to the average energy arrival rate, to each channel use as long as there is sufficient energy in the battery. When the energy in the battery is exhausted, no energy is allocated until the next energy arrival. \cite{SrivastavaKoksal} shows that this strategy becomes asymptotically optimal as  $B_{max}$ increases.\footnote{The strategy discussed in \cite{SrivastavaKoksal} is a slight variation of the {\it Uniform Policy} in that the amount of energy utilized in each channel use is  $\pm\epsilon$ of the average energy arrival rate. $\epsilon$ decreases to zero as $B_{max}\to\infty$ and the decay of $\epsilon$ controls the battery discharge probability and the convergence speed to the optimal average rate as $B_{max}\to\infty$. However, in its essence the strategy is a {\it Uniform Policy}, so its behavior is captured by the {\it Uniform Policy} plots.} The same intuition is suggested by \cite{OzelUlukus_BInfty} in the information-theoretic setting. 

Figure \ref{fig: p15} and \ref{fig: p5} summarize the results. In all the plots, the curve with label ``Upper Bound" is the upper bound on the average rate achieved by any feasible policy, i.e. $1/2 \log(1+ pB_{max})$ when $B_{max} \le E$ and $1/2 \log(1+ pE)$ when $B_{max} > E$. When $B_{max} \le E$, the curve with label ``Constant Fraction Policy" is the rate achieved by the strategy we proposed in the earlier section (Theorem~\ref{thm: const_gap_rate}), which uses a fixed $p$ fraction of the remaining energy in the battery; the curve with label ``Uniform Policy" is the rate achieved by a strategy that allocates $pB_{max}$ amount of energy, which is the average energy arrival rate in this case,  if there is enough energy left in the battery, and $0$ energy otherwise. Similarly when $B_{max} > E$, the {\it Constant Fraction Policy} uses a fixed fraction $p$ of the available energy in the battery $B_t$ and the {\it Uniform Policy} uses energy $pE$ at each channel use if possible. There are two curves for each of these strategies. The curves with ``(Lower Bound)" in the label represent an analytical lower bound we can compute on the rate achieved by these strategies by assuming $B_{max} = E$ (so these lower bounds remain the same as long as $B_{max} > E$). In particular, for the {\it Constant Fraction Policy}, the lower bound corresponds to the rate achieved by the strategy proposed in Theorem~\ref{thm: const_gap_rate2}. Recall the discussion in the paragraph after \eqref{eq:needeqnumber} which suggests a strategy using a constant fraction $p$ of the available energy would actually achieve a larger rate when $B_{max} > E$. This actual rate is obtained by running Monte Carlo simulation and is given by the curve with ``(Monte Carlo)" in the label. Similarly, it's possible to analytically compute a lower bound on the rate achieved by the {\it Uniform Policy} by assuming $B_{max} = E$ and the actual performance is obtained by Monte Carlo simulations.

Based on Fig. \ref{fig: B_le_E_15} and \ref{fig: B_le_E_5}, we see that in the $B_{max} \le E$ regime, the {\it Constant Fraction Policy} indeed tracks the upper bound within a constant gap for all values of $B_{max}$, whereas the gap of the {\it Uniform Policy} starts to diverge at around $15$ to $20$ dB depending on $p$. A similar conclusion holds for Fig. \ref{fig: B_2E_15} and \ref{fig: B_2E_5} where $B_{max} > E$ with the ratio of $B_{max}$ to $E$ being fixed at $2$. Fig. \ref{fig: B_8E_15} and \ref{fig: B_8E_5} showed that when $B_{max} \gg E$ (in this case $B_{max}= 8E$), the performances of both policies are very similar to each other across all SNR regimes. Moreover, they both track the upper bound very closely. This is not surprising given that  \cite{SrivastavaKoksal}, \cite{OzelUlukus_BInfty} show that the {\it Uniform Policy} converges to the upper bound as $B_{max}/E \to \infty$. Fig. \ref{fig: B_8E_15} and \ref{fig: B_8E_5} empirically show that our {\it Constant Fraction Policy} performs just as good when $B_{max} \gg E$. 

The figures also illustrate the difference between a constant gap guarantee on optimality and an asymptotic guarantee on optimality as $B_{max}\to\infty$. While Fig. \ref{fig: B_8E_15} and \ref{fig: B_8E_5} show that the {\it Uniform Policy} becomes optimal in the regime when $B_{max} \gg E$, when $B_{max}$ and $E$ are comparable this strategy can be arbitrarily away from the optimal rate (as illustrated in the figures for $B_{max} \le E$ and $B_{max}=2E$).\footnote{Indeed, this fact can be shown analytically; for example taking $B_{max}=E$, one can show that the gap between the upper bound and the rate achieved by the {\it Uniform Policy} increases to infinity as $B_{max}=E\rightarrow \infty$. However, we do not provide a proof of this fact since the trend is already quite obvious from the graphs.} Our {\it Constant Fraction Policy} on the other hand is within a bounded gap from optimality for all parameter values as guaranteed by our theoretical results. Finally, observe that there is not much qualitative difference between Fig. \ref{fig: p15} and  Fig. \ref{fig: p5} which correspond to  different values of energy arrival probability $p$. This is expected as our constant gap guarantees hold independent of $p$.

\section{The information theoretic capacity of the finite battery system}
\label{sec: Info_Cap}
 
In this section, we approach the system in Section~\ref{sec: SysModel} with Bernoulli energy arrival process from an information theoretic perspective. In particular, we derive an upper and a lower bound on the information-theoretic capacity of the channel and show that the gap between these upper and lower bounds is no more than a constant for all choices of the system parameters. Our lower bound relies on the near-optimal energy allocation policy we developed in Section~\ref{sec: Online_Policy}, and reveals the connection between the two problems by developing a codebook construction which allows to implement a given energy allocation policy. We examine the regime where $B_{max} \le E$ in Section \ref{subsec: B_less_E}, and the regime $B_{max} > E$ in Section \ref{subsec: B_greater_E}. 
 
\subsection{The $B_{max} \le E$ regime}
\label{subsec: B_less_E}

In the case $B_{max} \le E$, each time the non-zero energy packet arrives, the battery will be filled up completely regardless of how much energy was remaining in the battery. In particular, at least $E - B_{max}$ amount of energy is wasted in every non-zero incoming energy packet. Therefore, a system with energy packet size $E' = E - (E- B_{max}) = B_{max}$ is equivalent to the original system in terms of the available communication resources and hence the two systems must have the same capacity. In the sequel, we consider the equivalent system with $E' = B_{max}$. 

Note that as a result of the above observation, in the regime $B_{max} \le E$, the capacity of the system can only explicitly depend on $B_{max}$ and $p$ and not $E$. We next provide an upper bound on the capacity as a function of $B_{max}$ and $p$. We will then provide an achievable scheme and show that the rate it achieves is within a constant gap from this upper bound for all choices of $B_{max}$ and $p$. 

\begin{thm}[Upper Bound on Capacity: $B_{max} \le E$]
\label{thm: UpperBoundC}
When $E \ge B_{max}$, the capacity $C$ of the channel with Bernoulli energy arrival process defined in Section \ref{sec: SysModel} is upper bounded by 
\begin{equation}
\label{eq: CapUpperBound}
C \le \frac{1}{2} \log \left(1 + p B_{max}\right) \triangleq C_{ub}( B_{max}, p). 
\end{equation} 
\end{thm}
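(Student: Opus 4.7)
The plan is to combine three standard ingredients: (i) a Fano-plus-Gaussian-entropy upper bound, (ii) Jensen's inequality applied to the concave map $x\mapsto\tfrac{1}{2}\log(1+x)$, and (iii) a simple energy-conservation argument for the battery that exploits the hypothesis $E\ge B_{max}$. The high-level intuition is that when $E\ge B_{max}$, any arriving packet of size $E$ is clipped to at most $B_{max}$ units of energy that can actually be stored, so the long-run average energy per channel use that the transmitter can deposit into the battery (and hence spend) is bounded by $pB_{max}$.

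First, I would apply the usual information-theoretic machinery. For any length-$N$ code of rate $R$ with vanishing error probability, Fano's inequality yields $NR\le I(W;Y^N)+N\epsilon_N$. Using that $W$ and $E^N$ are independent, the encoder map $X^N=f(W,E^N)$, and the data-processing inequality, this is upper bounded by $I(X^N;Y^N\mid E^N)+N\epsilon_N$. The i.i.d.\ Gaussian noise $N_t$ is independent of $(X^N,E^N)$, so the chain rule together with the Gaussian maximum-entropy bound and $\mathbb{E}[Y_t^2]\le 1+\mathbb{E}[X_t^2]$ gives
\begin{align*}
NR \;\le\; \sum_{t=1}^N \bigl(h(Y_t)-h(N_t)\bigr) + N\epsilon_N \;\le\; \sum_{t=1}^N \tfrac{1}{2}\log\bigl(1+\mathbb{E}[X_t^2]\bigr)+N\epsilon_N.
\end{align*}
Dividing by $N$ and invoking concavity of $\log(1+\cdot)$,
\begin{align*}
R \;\le\; \tfrac{1}{2}\log\!\Bigl(1+\tfrac{1}{N}\sum_{t=1}^N\mathbb{E}[X_t^2]\Bigr)+\epsilon_N.
\end{align*}

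Second, I would bound the per-slot average input energy. Define the effective energy deposited into the battery at slot $t+1$ by $\tilde E_{t+1}:=B_{t+1}-B_t+|X_t|^2$. From the update rule \eqref{eq: batteryUpdate}, $\tilde E_{t+1}=\min\bigl(E_{t+1},\,B_{max}-B_t+|X_t|^2\bigr)$. Feasibility gives $|X_t|^2\le B_t\le B_{max}$, so $B_{max}-B_t+|X_t|^2\ge 0$ and $\tilde E_{t+1}\le \min(E_{t+1},B_{max})$. Under the Bernoulli profile with $E\ge B_{max}$, $\min(E_{t+1},B_{max})=B_{max}\,\mathbbm{1}\{E_{t+1}=E\}$, so $\mathbb{E}[\tilde E_{t+1}]\le pB_{max}$. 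Telescoping,
\begin{align*}
\sum_{t=1}^N |X_t|^2 \;=\; \sum_{t=1}^N \tilde E_{t+1}+B_1-B_{N+1}\;\le\;\sum_{t=1}^N \tilde E_{t+1}+B_{max}\quad \text{a.s.},
\end{align*}
so $\tfrac{1}{N}\sum_t \mathbb{E}[X_t^2]\le pB_{max}+B_{max}/N$.

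Combining the two displays and letting $N\to\infty$ (so $\epsilon_N\to 0$ and $B_{max}/N\to 0$) yields $R\le\tfrac{1}{2}\log(1+pB_{max})$, which is the claimed bound. The main obstacle is handling the nonlinear $\min$ in the battery dynamics cleanly; the trick is to work with the effective increment $\tilde E_{t+1}$ rather than the raw arrival $E_{t+1}$, and to use the feasibility $|X_t|^2\le B_t$ to guarantee that the cap at $B_{max}$ is the only place the hypothesis $E\ge B_{max}$ is invoked. This lets us sidestep any case analysis on overflow events or reasoning about the interaction between the transmitter's strategy and the clipping, keeping the converse short and code-independent.
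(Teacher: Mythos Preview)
Your proof is correct, but it takes a genuinely different route from the paper's. The paper's argument is a two-line reduction: it first observes that when $E\ge B_{max}$ the system is equivalent to one with reduced packet size $E'=B_{max}$ (since each arriving packet is clipped to $B_{max}$ anyway), and then invokes the infinite-battery capacity result of Ozel and Ulukus, $C_\infty=\tfrac{1}{2}\log(1+pE')=\tfrac{1}{2}\log(1+pB_{max})$, as an upper bound via the monotonicity of capacity in battery size. Your approach is a self-contained first-principles converse: Fano plus Gaussian maximum entropy, followed by a telescoping energy-conservation argument on the effective battery increment $\tilde E_{t+1}$. The paper's proof is shorter and more conceptual but leans on an external capacity result; yours avoids that dependency and is more robust. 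In particular, your telescoping bound $\tilde E_{t+1}\le\min(E_{t+1},B_{max})$ does not actually require the hypothesis $E\ge B_{max}$ and already yields $C\le\tfrac{1}{2}\log\bigl(1+p\min(E,B_{max})\bigr)$ uniformly in both regimes, so you recover the paper's separate upper bound for $B_{max}>E$ (Theorem~\ref{thm: B_greater_E}) for free without a second appeal to the infinite-battery result.
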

\begin{proof}
Note that the capacity of the channel in Section~\ref{sec: SysModel}  should be an increasing function of the battery size,  since we can always choose not to use the extra battery space. Therefore, the capacity of our channel with finite battery is upper bounded by the capacity of the same channel with infinite battery size. The infinite battery capacity has been characterized in \cite{OzelUlukus_BInfty} as
\begin{align}
 C_{\infty} & = \frac{1}{2}\log(1 + \mathbb{E}\left[ E_t \right] ) \notag \\
 & = \frac{1}{2}\log(1 + pE) \label{eq: ClassicAWGN}
\end{align}
since $\mathbb{E}\left[ E_t \right] = pE$ in our current case. Based on the earlier discussion, when $E \ge B_{max}$, the capacity of the system is the same as the capacity of a system with reduced energy packet size $E= B_{max}$. Plugging $E= B_{max}$ in \eqref{eq: ClassicAWGN}  gives the desired upper bound in \eqref{eq: CapUpperBound}.
\end{proof}

Next, we will provide an achievable strategy for the channel in Fig.~\ref{fig: model} when the energy arrival process $\{E_t\}$ is also causally known at the receiver. Later, we will use this result to derive an achievable rate for our original system in Section~\ref{sec: SysModel} where we assume that the receiver has no information about the energy arrival process.

\begin{thm}[Achievable Scheme with CSIR]
\label{thm: Achievable}
Assume that for the system defined in Section \ref{sec: SysModel}, the Bernoulli energy arrival process $\{E_t\}$ is causally known not only at the transmitter but also at the receiver. Then we can achieve any rate:
\begin{equation}
R_{ach} \leq  \displaystyle\sum\limits_{j = 0}^{\infty} p (1-p)^j \max_{p(x): |X|^2 \le \mathcal{E}_j} I(X; Y) \label{eq: Achievable} 
\end{equation}
for any non-negative $\mathcal{E}_0, \mathcal{E}_1, ...$ satisfying
\begin{equation}
\sum\limits_{j = 0}^{\infty} \mathcal{E}_j \le B_{max}. \label{eq: EnergyConstr}
\end{equation}
\end{thm}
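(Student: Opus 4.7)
The approach is a time-multiplexing argument that decomposes the channel into a countable family of parallel AWGN sub-channels indexed by $j = 0, 1, 2, \ldots$. The key observation is that, given its causal knowledge of $\{E_t\}$, the receiver can compute at each time $t$ the quantity $j(t) := t - \max\{t' \le t : E_{t'} = E\}$, i.e.\ the number of slots since the last energy arrival. Viewing the set of time slots with $j(t) = j$ as the $j$-th sub-channel and imposing the amplitude constraint $|X_t|^2 \le \mathcal{E}_j$ on those slots, the problem reduces to coding independently on each sub-channel at a rate close to $\max_{p(x): |X|^2 \le \mathcal{E}_j} I(X;Y)$; averaging over the asymptotic density $p(1-p)^j$ of each sub-channel gives exactly the right-hand side of~\eqref{eq: Achievable}.

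Concretely, I would fix $\epsilon > 0$ and a truncation level $J$, and for each $j \in \{0, 1, \ldots, J\}$ pick a capacity-achieving $(n_j, 2^{n_j R_j})$ code for the AWGN channel with amplitude constraint $\sqrt{\mathcal{E}_j}$, at rate $R_j \ge \max_{p(x): |X|^2 \le \mathcal{E}_j} I(X;Y) - \epsilon$ (this is a standard result for peak-constrained AWGN). For block length $N$, let $N_j^{(N)}$ denote the number of slots $t \in \{1,\ldots,N\}$ with $j(t) = j$. By the law of large numbers for the renewal process of arrivals, $N_j^{(N)}/N \to p(1-p)^j$ almost surely, so taking $n_j = \lceil N(p(1-p)^j - \epsilon) \rceil$ ensures $n_j \le N_j^{(N)}$ with probability $1 - o(1)$. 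The transmitter, at a slot with $j(t) = j \le J$, emits the next unsent symbol of the $j$-th codebook; at slots with $j(t) > J$ it transmits $0$. Feasibility follows as in Section~\ref{sec: Online_Policy}: since the battery is refilled to $B_{max}$ at every arrival (using the equivalent system with $E = B_{max}$ from the start of Section~\ref{subsec: B_less_E}) and the cumulative energy drawn between consecutive arrivals is at most $\sum_{j=0}^J \mathcal{E}_j \le B_{max}$, one has $|X_t|^2 \le \mathcal{E}_{j(t)} \le B_t$ at all times.

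Decoding proceeds in parallel across sub-channels: the receiver groups its samples by $j(t)$, discards those with $j(t) > J$, and feeds the first $n_j$ samples of each group into the corresponding decoder. Each sub-channel is then a memoryless AWGN channel with a peak amplitude constraint, so the per-sub-channel error probability vanishes as $n_j \to \infty$, and a union bound over the $J+1$ decoders together with the LLN concentration event yields vanishing overall error probability. The achieved rate per channel use is $\frac{1}{N}\sum_{j=0}^J n_j R_j$, which tends to $\sum_{j=0}^J p(1-p)^j \max_{p(x): |X|^2 \le \mathcal{E}_j} I(X;Y) - O(\epsilon)$. Letting $\epsilon \to 0$ and then $J \to \infty$ (the tail $\sum_{j > J} p(1-p)^j$ is geometric and hence negligible, and the mutual information terms are uniformly bounded by $\tfrac12 \log(1 + B_{max})$) recovers~\eqref{eq: Achievable}.

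The main obstacle I anticipate is the interplay between the infinite number of sub-channels and the random allocation of slots across them; the truncation at $J$ combined with an LLN-based lower bound on $N_j^{(N)}$ is the standard device that tames this. A secondary but important point is the need to cleanly invoke the existence of good amplitude-constrained AWGN codes achieving rates arbitrarily close to $\max_{p(x): |X|^2 \le \mathcal{E}_j} I(X;Y)$, which is standard but worth citing explicitly for peak-power-constrained channels.
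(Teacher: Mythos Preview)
Your proposal is correct and follows essentially the same approach as the paper: both decompose the channel into parallel amplitude-constrained AWGN sub-channels indexed by the time $j$ since the last arrival, truncate at a finite level, use separate codebooks interleaved according to $j(t)$, and invoke the law of large numbers to guarantee enough slots per sub-channel. The only minor difference is that the paper fixes the number $L$ of energy arrivals (so the total duration is random and bounded via an additional LLN event), whereas you fix the blocklength $N$ and count slots of each type directly; both are equivalent ways of setting up the same argument.
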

The proof of Theorem \ref{thm: Achievable} is provided in the Appendix.

The idea for the achievable scheme is that if both the transmitter and receiver know when the energy packet $E$ arrives, they can agree on an energy allocation strategy ahead of time. As we did in Section~\ref{sec: Online_Policy}, here we concentrate on an energy allocation policy $\mathcal{E}_j$ that is invariant across different epochs (the period of time between two adjacent non-zero packet arrivals). $\mathcal{E}_j$ denotes the amount of energy allocated to transmission, $j$ channel uses after the last time the battery was recharged, i.e. if energy arrives at the current channel use, we allocate $\mathcal{E}_0$ amount of energy for transmission; if energy arrived in the previous channel use but not the current channel use, then we allocate $\mathcal{E}_1$ amount of energy for transmission, etc. The transmitter and receiver agree ahead of time on a sequence of codebooks $\mathcal{C}^{(j)}$ where each codebook is amplitude-constrained to $\mathcal{E}_j$, i.e. the symbols of each codeword in $\mathcal{C}^{(j)}$ are such that $|X|^2 \le \mathcal{E}_j$. This ensures that the symbol transmitted at the corresponding time will not exceed the energy constraint $\mathcal{E}_j$. We assume that the transmitter has one codeword $c_j\in \mathcal{C}^{(j)}$ from each codebook to communicate to the receiver and the symbols of these codewords are interleaved as dictated by the realization of the energy arrival process. For example, upon the arrival of the first energy packet, the transmitter sends the first symbol of codeword $c_0$; if there is no energy packet arrival in the next channel use, it transmits the first symbol of codeword $c_1$ in the next channel use, etc. Once the second energy packet arrives, the cycle is reset and the transmitter moves to transmitting the second symbol of the codeword $c_0$, then the second symbol in codeword $c_1$, etc. (See the Appendix for a detailed description of the strategy and its performance analysis.) \eqref{eq: Achievable} gives the rate we can achieve with such a strategy in the large blocklength limit. \eqref{eq: EnergyConstr} ensures that the total energy spent does not exceed the available energy in the battery at $j=0$ which is equal to $B_{max}$ (since when $E\geq B_{max}$, the battery is recharged to full every time an energy packet arrives).

We next show that when there is no channel state information at the receiver we can achieve the rate in Theorem~\ref{thm: Achievable} with at most $H(p)$ bits of penalty. The main idea of the result is similar to Theorem 1 in \cite{Jafar_CSIR} which shows that over an information stable channel the maximum possible capacity improvement due to receiver side information is bounded by the amount of the side information itself. In the current case, the energy constraints in \eqref{eq: ampConst} and \eqref{eq: batteryUpdate} introduce memory into the system and its not a priori clear if the channel is information stable or not. The following theorem extends Theorem 1 of \cite{Jafar_CSIR} to general, not necessarily information-stable, channels. %Nonetheless, we show that the strategy we propose in Theorem~\ref{thm: Achievable} is information stable which allows us to establish the following theorem. 
\begin{thm}[Capacity improvement due to RX Side Info]\label{thm:CapCSIR} Consider a general channel, not necessarily stationary memoryless, defined as a sequence  $\{ W^n(\cdot|\cdot)=P_{Y^{(n)}|X^{(n)}}:\mathcal{X}^{(n)}\rightarrow \mathcal{Y}^{(n)}\}_{n=1}^\infty, $ of arbitrary  conditional probability distributions   together with an input and an output alphabet for each $n$ (which need not be Cartesian products of a basic input and an output alphabet). The improvement in channel capacity due to the availability of side information at the receiver is upper-bounded by the \textit{spectral sup-entropy rate} of the side information process $G=\{G^n\}_{n=1}^\infty$, which is defined as 
\begin{equation*}
\bar{H}(G)=\text{p-}\limsup_{n\rightarrow\infty}\frac{1}{n}\log \frac{1}{P_{G^n}(g^n)},
\end{equation*}
where $\text{p-}\limsup$ denotes \textit{limsup in probability} (see \cite[p.14]{Han}).
\end{thm}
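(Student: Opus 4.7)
The plan is to invoke the Verdu--Han general capacity formula and reduce the theorem to a purely information-spectral inequality. The capacity with CSIR is obtained by viewing $(Y^n,G^n)$ as the enlarged output of a general channel, which yields $C_{SI} = \sup_{X}\underline{I}(X;(Y,G))$, while $C=\sup_{X}\underline{I}(X;Y)$ is the original Verdu--Han formula; here $\underline{I}$ denotes the spectral inf-mutual information rate. Hence it suffices to prove
\begin{equation*}
\underline{I}(X;(Y,G)) - \underline{I}(X;Y) \;\leq\; \bar{H}(G)
\end{equation*}
uniformly in the input process $X=\{X^n\}$, after which taking suprema delivers the theorem.

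First I would use the pointwise chain rule for information densities,
\begin{equation*}
\frac{1}{n}\log\frac{P_{Y^n,G^n|X^n}}{P_{Y^n,G^n}} \;=\; \frac{1}{n}\log\frac{P_{Y^n|X^n}}{P_{Y^n}} \;+\; \frac{1}{n}\log\frac{P_{G^n|Y^n,X^n}}{P_{G^n|Y^n}},
\end{equation*}
together with the trivial bound $\log(P_{G^n|Y^n,X^n}/P_{G^n|Y^n})\leq \log(1/P_{G^n|Y^n}(G^n|Y^n))$, which holds since conditional probabilities are at most one. Taking $\text{p-}\liminf$ and applying the standard information-spectrum inequality $\text{p-}\liminf(A_n+B_n)\leq \text{p-}\liminf A_n+\text{p-}\limsup B_n$ would yield
\begin{equation*}
\underline{I}(X;(Y,G)) \;\leq\; \underline{I}(X;Y) + \bar{H}(G|Y),
\end{equation*}
where $\bar{H}(G|Y)=\text{p-}\limsup \frac{1}{n}\log\frac{1}{P_{G^n|Y^n}(G^n|Y^n)}$ is the spectral conditional sup-entropy rate.

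Second, I would show $\bar{H}(G|Y)\leq \bar{H}(G)$. The pointwise identity $\log\frac{1}{P_{G^n|Y^n}}=\log\frac{1}{P_{G^n}}-i(G^n;Y^n)$, combined with $\text{p-}\limsup(A_n-B_n)\leq \text{p-}\limsup A_n-\text{p-}\liminf B_n$, reduces the inequality to $\underline{I}(G;Y)\geq 0$. The latter is a one-line change-of-measure calculation: since $\mathbb{E}[2^{-i(G^n;Y^n)}]=\sum P_{G^n}P_{Y^n}=1$, Markov's inequality gives
\begin{equation*}
\Pr\!\left(\tfrac{1}{n}i(G^n;Y^n)<-\epsilon\right) \;=\; \Pr\!\left(2^{-i(G^n;Y^n)}>2^{n\epsilon}\right) \;\leq\; 2^{-n\epsilon} \;\longrightarrow\; 0
\end{equation*}
for every $\epsilon>0$, so $\text{p-}\liminf \frac{1}{n}i(G^n;Y^n)\geq 0$. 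Chaining the three inequalities and taking the supremum over inputs would then give $C_{SI}-C\leq \bar{H}(G)$.

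The main obstacle is foundational rather than computational: one must formally verify that the CSIR setup fits into the Verdu--Han general-channel framework, i.e., specify a legitimate sequence of conditional laws $P_{Y^n,G^n|X^n}$ for each $n$ so that enlarging the output from $Y^n$ to $(Y^n,G^n)$ is the correct way to capture the receiver's knowledge, and justify that the joint law of $(G^n,Y^n)$ used in computing $\bar{H}(G|Y)$ is the one induced by any input process for which the bound is asserted. Once these measurability and set-up issues are in place, everything else is routine $\text{p-}\liminf$/$\text{p-}\limsup$ algebra, and the only substantive information-theoretic ingredient is the non-negativity of the spectral inf-mutual information rate.
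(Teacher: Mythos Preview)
Your proposal is correct and follows essentially the same route as the paper's proof: both start from the Verd\'u--Han general capacity formula, apply the pointwise chain rule for information densities, use the inequality $\text{p-}\liminf(A_n+B_n)\le\text{p-}\liminf A_n+\text{p-}\limsup B_n$ together with $P_{G^n|Y^n,X^n}\le 1$ to reach $\underline{I}(X;(Y,G))\le\underline{I}(X;Y)+\bar H(G|Y)$, and then invoke $\underline{I}(G;Y)\ge 0$ to pass from $\bar H(G|Y)$ to $\bar H(G)$. The only cosmetic differences are that you spell out the Markov-inequality proof of $\underline{I}(G;Y)\ge 0$ while the paper simply cites Han, and the paper explicitly records the assumption that $G$ is independent of the input process (so that $\bar H(G)$ does not vary with $X$ when taking the supremum), a point you should also make explicit.
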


The proof of the theorem is given in the Appendix. In order to apply the theorem to our current channel with causal transmitter side information, we can use Shannon's technique in \cite{Shannon} to first transform the channel to an equivalent channel without states but with an enlarged input alphabet over so called Shannon strategies (this transformation has been developed in \cite{MaoHassibi}) and then apply Theorem~\ref{thm:CapCSIR} to the equivalent channel. Since the side information process $G$ in our case is the i.i.d. Bernoulli energy arrival process $\{E_t\}$, which has entropy rate $\bar{H}(G)=H(p)$ we immediately get the following theorem.

\begin{thm}[Achievable Rate without CSIR]
\label{thm: Achievable2}
Consider the system defined in Section \ref{sec: SysModel} where the energy arrival process $\{E_t\}$ is causally known only at the transmitter, but not at the receiver. The capacity $C$ of this system is lower bounded by
\begin{align}
C \geq  \displaystyle\sum\limits_{j = 0}^{\infty} & \; p  (1-p)^j \max_{p(x): |X|^2 \le \mathcal{E}_j} I(X; Y)- H(p) \label{eq: Achievable2}  \\
& \text{subject to: } \;\;\; \displaystyle\sum\limits_{j = 0}^{\infty} \mathcal{E}_j \le B_{max} \label{eq: EnergyConstr2}
\end{align}
where $H(p)$ is the binary entropy function.
\end{thm}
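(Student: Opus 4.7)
The plan is to deduce \eqref{eq: Achievable2} by combining Theorem~\ref{thm: Achievable} (achievability \emph{with} CSIR) with Theorem~\ref{thm:CapCSIR} (a generic bound on how much side information at the receiver can help). First, I would cast our state-dependent channel into the state-less form required by Theorem~\ref{thm:CapCSIR}. Following \cite{MaoHassibi}'s adaptation of Shannon's strategies, at time $t$ the transmitter's effective input is a (causally chosen) function $T_t$ mapping the current energy arrival $E_t$ to an amplitude $X_t$, with $T_t$ allowed to depend on $T^{t-1}$ and $E^{t-1}$. This absorbs the battery evolution \eqref{eq: batteryUpdate} into the choice of strategy, and the induced sequence of conditional distributions $W^n = P_{Y^n \mid T^n}$ fits the framework of Theorem~\ref{thm:CapCSIR}, with the energy arrival process $G^n = E^n$ playing the role of side information.

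Next I would identify the spectral sup-entropy rate of this side information. Since $\{E_t\}$ are i.i.d.\ Bernoulli$(p)$, the weak law of large numbers gives
\[
-\frac{1}{n}\log P_{E^n}(E^n)\;\xrightarrow{\text{in prob.}}\; H(p),
\]
so $\bar{H}(G) = H(p)$. Applying Theorem~\ref{thm:CapCSIR} to the equivalent stateless channel, the capacity with $E^n$ available at the receiver exceeds the capacity without it by at most $H(p)$. Theorem~\ref{thm: Achievable} furnishes a lower bound on the former equal to the right-hand side of \eqref{eq: Achievable}, and subtracting $H(p)$ yields the bound \eqref{eq: Achievable2} under the energy constraint \eqref{eq: EnergyConstr2}.

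The main obstacle is ensuring the strategies reformulation is legitimate in our setting: unlike Shannon's original framework with i.i.d.\ state, the battery state $B_t$ depends on the transmitter's past actions through \eqref{eq: batteryUpdate}, so the equivalent channel inherits nontrivial memory and is not a priori information-stable. This is precisely why Theorem~\ref{thm:CapCSIR} is formulated for a completely general sequence $\{W^n\}$ of conditional distributions, rather than for stationary memoryless channels: all the memory is absorbed into $W^n$, and the spectral sup-entropy rate bound goes through without any stability assumption. A secondary technical point is to verify that the achievability scheme behind Theorem~\ref{thm: Achievable} — built out of interleaved codebooks $\mathcal{C}^{(j)}$ with amplitude budgets $\mathcal{E}_j$ — corresponds to a valid choice of strategies $\{T_t\}$ in the reformulated channel, which is immediate from the construction since the strategy at time $t$ only consults the index $j = t-\max\{t'\leq t: E_{t'}=E\}$.
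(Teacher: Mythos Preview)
Your proposal is correct and mirrors the paper's primary argument almost verbatim: the paper likewise transforms the channel via Shannon strategies (citing \cite{MaoHassibi}), applies Theorem~\ref{thm:CapCSIR} with $G=\{E_t\}$, and uses $\bar{H}(G)=H(p)$ to pass from the CSIR achievability of Theorem~\ref{thm: Achievable} to the non-CSIR bound. The paper additionally supplies an alternative, more hands-on proof in the Appendix via an auxiliary finite-state channel, but your route matches the main one.
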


While the proof of Theorem~\ref{thm: Achievable2} follows directly from Theorem~\ref{thm:CapCSIR}, in the Appendix we provide an alternative proof for Theorem~\ref{thm: Achievable2} that makes specific use of our channel structure and the achievable scheme we propose in Theorem~\ref{thm: Achievable}.

Note that the mutual information maximization problem in \eqref{eq: Achievable2} is over the class of distributions with bounded support in $\left[ - \sqrt{\mathcal{E}_j}, \sqrt{\mathcal{E}_j} \right]$. This is a nontrivial optimization problem and as shown in \cite{Smith}, the optimal input distribution turns out to be discrete rather than continuous. Below we lower bound the optimal value of this optimization problem by considering the uniform distribution over the interval $\left[ - \sqrt{\mathcal{E}_j}, \sqrt{\mathcal{E}_j} \right]$.
 
\begin{lem}[LB on Amplitude-constrained AWGN Capacity]
\label{lem: Gap_Uniform}
\begin{equation}
\label{eq: UniformCap}
\max_{p(x): |X|^2 \le \mathcal{E}_j} I(X; Y) \ge \frac{1}{2} \log \left(1 + \mathcal{E}_j  \right) - 1.04
\end{equation}
\end{lem}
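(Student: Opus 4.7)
The plan is to lower bound the maximum by evaluating the mutual information at one specific feasible input distribution, namely $X \sim \text{Unif}[-\sqrt{\mathcal{E}_j}, \sqrt{\mathcal{E}_j}]$. This is trivially feasible since $|X|^2 \le \mathcal{E}_j$ almost surely. Since $Y = X + N$ with $N \sim \mathcal{N}(0,1)$ independent of $X$, we have $h(Y \mid X) = h(N) = \tfrac{1}{2}\log(2\pi e)$, so the task reduces to lower bounding the differential entropy $h(Y)$ of the convolution of a uniform and a Gaussian density, and then subtracting $\tfrac{1}{2}\log(2\pi e)$.

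The key technical step is to avoid computing $h(Y)$ exactly (there is no nice closed form for the entropy of a uniform plus Gaussian) and instead invoke the Entropy Power Inequality. Since $X$ and $N$ are independent, EPI yields
\begin{equation*}
2^{2h(Y)} \;\ge\; 2^{2h(X)} + 2^{2h(N)} \;=\; 4\mathcal{E}_j + 2\pi e,
\end{equation*}
using $h(X) = \tfrac{1}{2}\log(4\mathcal{E}_j)$. Substituting back gives
\begin{equation*}
I(X;Y) \;\ge\; \tfrac{1}{2}\log\!\bigl(4\mathcal{E}_j + 2\pi e\bigr) - \tfrac{1}{2}\log(2\pi e) \;=\; \tfrac{1}{2}\log\!\left(1 + \tfrac{2\mathcal{E}_j}{\pi e}\right).
\end{equation*}

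The remaining step is to bound the shaping loss between this expression and $\tfrac{1}{2}\log(1+\mathcal{E}_j)$. Writing the gap as $\tfrac{1}{2}\log\!\bigl((1+\mathcal{E}_j)/(1+2\mathcal{E}_j/(\pi e))\bigr)$ and differentiating with respect to $\mathcal{E}_j$, one sees that because $2/(\pi e) < 1$ the ratio is monotonically increasing and attains its supremum as $\mathcal{E}_j \to \infty$, where it equals $\tfrac{1}{2}\log(\pi e / 2) \approx 1.047$. Thus the gap is bounded by a numerical constant independent of $\mathcal{E}_j$; this is the content of the lemma (the claimed $1.04$ appears to be a slight underestimate of the EPI constant $\tfrac{1}{2}\log(\pi e/2)$, which can alternatively be tightened by treating the small-$\mathcal{E}_j$ regime, where the gap vanishes, separately from the large-$\mathcal{E}_j$ asymptote). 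The main obstacle is not analytical but rather numerical: one must be careful that the supremum of the gap function is taken correctly and that the chosen uniform distribution is the right simple choice — other amplitude-constrained distributions (e.g.\ two mass points at $\pm\sqrt{\mathcal{E}_j}$) are known to be capacity-achieving at low SNR but would not yield a uniformly bounded gap across all $\mathcal{E}_j$.
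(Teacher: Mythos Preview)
Your proof is correct and essentially matches the paper's approach: the paper also chooses $X$ uniform on $[-\sqrt{\mathcal{E}_j},\sqrt{\mathcal{E}_j}]$ and then invokes an Ozarow--Wyner bound, which is precisely the EPI estimate you derive directly, yielding the identical constant $\tfrac{1}{2}\log(\pi e/2)$. Your observation that $1.04$ slightly understates $\tfrac{1}{2}\log_2(\pi e/2)\approx 1.047$ is accurate and applies equally to the paper's own rounding.
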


\begin{proof}
Take $p(x)$ to be the uniform distribution on $\left[ - \sqrt{\mathcal{E}_j}, \sqrt{\mathcal{E}_j} \right]$, then the average power is $\mathbb{E}[X^2] = \frac{\mathcal{E}_j}{3}$. We use the following result which is proved in \cite[Eq.(7)]{OzarowWyner}:
\begin{prop} Consider a discrete-time AWGN channel with average power constraint $P$, i.e. $\mathbb{E}[|X|^2]\leq P$, and noise variance $\sigma^2$. Let the input $X$ of the channel be uniformly distributed over $[-\sqrt{3P},\sqrt{3P}]$ (so that $\mathbb{E}[|X|^2]=P$) and $Y$ be the corresponding output random variable. Then 
$$
I(X;Y)\geq \log\left(1+\frac{P}{\sigma^2}\right)- \frac{1}{2}\log\left(\frac{\pi e}{6}\right).
$$
\end{prop}
By using the result in the proposition, we get
\begin{align}
\max_{p(x): |X|^2 \le \mathcal{E}_j} & I( X;  Y) \ge \frac{1}{2} \log \left(1 + \frac{\mathcal{E}_j}{3} \right)  - \frac{1}{2}\log\left(\frac{\pi e}{6}\right) \notag\\
 & \ge \frac{1}{2} \log(1 + \mathcal{E}_j)  - \left(\frac{1}{2}\log(3) +  \frac{1}{2}\log\left(\frac{\pi e}{6}\right) \right) \notag \\
 & = \frac{1}{2} \log(1 + \mathcal{E}_j) - 1.04
\end{align}
\end{proof}

Combining Lemma \ref{lem: Gap_Uniform} and Theorem \ref{thm: Achievable2}, we can conclude that the following rate is achievable in the system defined in Section \ref{sec: SysModel} where the energy arrival process is causally known only at the transmitter, but not at the receiver:
\begin{align}
\label{eq: achievable_2}
R_{ach} \ge  \displaystyle\sum\limits_{j = 0}^{\infty} p & (1-p)^j \frac{1}{2} \log \left(1 + \mathcal{E}_j \right) - H(p) - 1.04 
\end{align} 
with $\mathcal{E}_{j}$ subject to the constraint \eqref{eq: EnergyConstr2}. Notice that up to a fixed constant, we are back to the online rate optimization problem studied in Section \ref{sec: Online_Policy} (see eq.\eqref{eq: ach}). Employing the near-optimal allocation policy from the previous section, which assigns $\mathcal{E}_j = p(1-p)^jB_{max}$, gives us the following achievable rate. 

\begin{lem}[Lower Bound on Capacity: $B_{max} \le E$]
\label{lem: LowerBoundC}
When $E \ge B_{max}$, the capacity of a system defined in Section \ref{sec: SysModel} with Bernoulli energy arrival process $\{E_t\}$ only causally known at the transmitter, but not at the receiver,  is lower bounded by
\begin{align}
C& (B_{max}, p)  \ge C_{lb}(B_{max}, p) \notag\\
& \triangleq  \left( \displaystyle\sum\limits_{j = 0}^{\infty} p(1-p)^j \frac{1}{2} \log\left( 1 + (1-p)^j p B_{max}  \right) - K(p)\right)^+ \label{eq: CapLowerBound}
\end{align}
where $a^+=\max(a, 0)$ and $K(p) = 1.04 + H(p)$ where $H(p)$ is the binary entropy function. 
\end{lem}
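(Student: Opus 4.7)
The proof is a direct specialization and assembly of previously established results; there is no genuine obstacle, only a careful bookkeeping of where each piece comes from. My plan is as follows.

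First, I would invoke the achievable rate bound already derived in equation \eqref{eq: achievable_2}, which says that for any non-negative sequence $\{\mathcal{E}_j\}_{j=0}^\infty$ satisfying $\sum_{j=0}^\infty \mathcal{E}_j \le B_{max}$, the rate
\begin{equation*}
R_{ach} \;\ge\; \sum_{j=0}^\infty p(1-p)^j \frac{1}{2}\log\!\left(1+\mathcal{E}_j\right) - H(p) - 1.04
\end{equation*}
is achievable by the transmitter, even without channel state information at the receiver. This expression combines the CSIR-achievable scheme of Theorem \ref{thm: Achievable}, the upper bound on the capacity penalty incurred when removing receiver side information (Theorem \ref{thm: Achievable2}), and the Ozarow--Wyner lower bound on the amplitude-constrained AWGN mutual information (Lemma \ref{lem: Gap_Uniform}).

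Second, I would make the specific choice
\begin{equation*}
\mathcal{E}_j \;=\; p(1-p)^j B_{max}, \qquad j = 0, 1, 2, \ldots,
\end{equation*}
inherited from the near-optimal online energy allocation policy of Theorem \ref{thm: const_gap_rate}. Feasibility is immediate: these are non-negative, and since $p \sum_{j=0}^\infty (1-p)^j = 1$ we have $\sum_j \mathcal{E}_j = B_{max}$, so the constraint $\sum_j \mathcal{E}_j \le B_{max}$ holds with equality. Substituting this allocation into the achievable rate expression yields
\begin{equation*}
R_{ach} \;\ge\; \sum_{j=0}^\infty p(1-p)^j \frac{1}{2}\log\!\left(1 + (1-p)^j p B_{max}\right) - H(p) - 1.04.
\end{equation*}

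Third, since the information-theoretic capacity is trivially non-negative (one can always transmit nothing and achieve rate zero), the achievable rate lower bound may be replaced by its positive part, giving
\begin{equation*}
C(B_{max}, p) \;\ge\; \left( \sum_{j=0}^\infty p(1-p)^j \frac{1}{2}\log\!\left(1 + (1-p)^j p B_{max}\right) - K(p) \right)^{+},
\end{equation*}
with $K(p) = 1.04 + H(p)$, which is precisely $C_{lb}(B_{max}, p)$ as defined in the lemma. Since all the hard work (the coding construction, the CSIR-removal argument, the Ozarow--Wyner bound, and the selection of the exponentially decaying energy allocation) is carried out in the preceding theorems and lemma, nothing further is required; the main conceptual point of this lemma is simply to plug the near-optimal online allocation of Section~\ref{sec: Online_Policy} into the information-theoretic achievable rate, thereby linking the two problems.
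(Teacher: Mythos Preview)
Your proposal is correct and follows essentially the same approach as the paper's own proof: invoke Theorem~\ref{thm: Achievable2} combined with Lemma~\ref{lem: Gap_Uniform} (which is exactly what equation~\eqref{eq: achievable_2} packages), specialize to $\mathcal{E}_j = p(1-p)^j B_{max}$, verify feasibility via the geometric sum, and take the positive part using non-negativity of capacity. The paper's proof is terser but identical in content.
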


\begin{proof}
This lemma is a direct consequence of Theorem \ref{thm: Achievable2} and Lemma \ref{lem: Gap_Uniform} with particular choice of $\mathcal{E}_j = p(1-p)^jB_{max}$. We complete the proof by verifying $\{\mathcal{E}_j\}$ satisfies the constraint \eqref{eq: EnergyConstr2} since $\displaystyle\sum\limits_{j = 0}^{\infty} p (1-p)^j B_{max} = B_{max}$, and use the fact that the capacity is non-negative. 

\end{proof}

The next theorem states that for all values of $B_{max}$ and $p$, the gap between the lower bound in \eqref{eq: CapLowerBound} and the upper bound in \eqref{eq: CapUpperBound} is bounded by a constant. 

\begin{thm} [Constant Gap]
\label{thm: ConstantGapBlessE}
For all $0<p<1, \,B_{max} >0$, we have the following inequality:
\begin{align}
\label{eq: gap}
C_{ub}(B_{max}, p) - C_{lb}(B_{max}, p) \le 2.58 \mbox{ bits}
\end{align}
\end{thm}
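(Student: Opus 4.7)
The starting observation is that, since $a^+\ge a$, we have $C_{lb}(B_{max},p) \ge S(B_{max},p) - K(p)$, where $S(B_{max},p) := \sum_{j\ge 0} p(1-p)^j \tfrac{1}{2}\log\!\bigl(1+(1-p)^j pB_{max}\bigr)$ is the raw sum inside the $(\cdot)^+$. Writing $G(B_{max},p) := C_{ub}(B_{max},p) - S(B_{max},p)$ for the allocation-rate gap of Theorem~\ref{thm: const_gap_rate}, this gives the clean reduction
\[
C_{ub}(B_{max},p) - C_{lb}(B_{max},p) \;\le\; G(B_{max},p) + K(p) \;=\; G(B_{max},p) + H(p) + 1.04,
\]
so it suffices to establish $G(B_{max},p) + H(p) \le 1.54$ uniformly.

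The heart of the plan is a sharper, \emph{$B_{max}$-independent} bound on $G$. Writing $G$ in the telescoped form
\[
G(B_{max},p) = \sum_{j=0}^\infty p(1-p)^j \cdot \tfrac{1}{2}\log_2\frac{1+pB_{max}}{1+(1-p)^j pB_{max}},
\]
the plan is to apply the elementary pointwise inequality $(1+a)/(1+q^j a)\le 1/q^j$ (equivalent to $q^j\le 1$ for $q=1-p$ and $a\ge 0$) inside each summand, and then use the mean of a shifted geometric distribution, $\sum_{j\ge 0} j\,p(1-p)^j = (1-p)/p$, to conclude
\[
G(B_{max},p) \;\le\; \phi(p) \;:=\; \frac{(1-p)\,\log_2\!\bigl(1/(1-p)\bigr)}{2p}.
\]
This bound is tight only in the regime $pB_{max}\to\infty$, which is precisely where $C_{ub}$ would otherwise be hardest to control and where the uniform $0.973$-bit bound from Theorem~\ref{thm: const_gap_rate} would waste the most room.

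What remains is the univariate verification that $\phi(p) + H(p) \le 1.54$ for every $p \in (0,1)$. Here $\phi$ is strictly decreasing on $(0,1)$ with $\phi(0^+) = \tfrac{1}{2}\log_2 e \approx 0.721$ and $\phi(1^-) = 0$, while $H$ is concave with maximum $1$ at $p=1/2$; their sum is smooth with a unique interior maximizer near $p\approx 0.42$, at which the value is about $1.525$. This last fact can be certified either analytically, by solving $\phi'(p) + H'(p) = 0$ after the substitution $u = 1-p$, or pragmatically by evaluating $\phi+H$ on a fine grid of $p$ and controlling the interpolation error via a uniform bound on $(\phi+H)'$ away from the endpoints. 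The main obstacle is exactly this last step: the algebra reducing everything to $\phi(p)+H(p)$ is routine, whereas pinning down the worst-case $p$ with a rigorous closed-form estimate requires a small but careful calculus argument. Once the $1.54$ bound is in hand, $C_{ub} - C_{lb} \le \phi(p) + H(p) + 1.04 \le 2.58$ follows immediately.
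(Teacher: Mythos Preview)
Your proposal is correct and follows the same overall skeleton as the paper's proof: drop the $(\cdot)^+$, bound $C_{ub}-C_{lb}$ by $G(B_{max},p)+H(p)+1.04$, and then reduce everything to the univariate inequality $\phi(p)+H(p)\le 1.52$--$1.54$ with $\phi(p)=\frac{1-p}{2p}\log_2\frac{1}{1-p}$. The paper performs exactly this last calculus check (stating the maximum $\approx 1.52$ at $p\approx 0.413$).

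The one genuine difference is how you bound $G$. The paper drops the ``$+1$'' inside the second logarithm, which produces the estimate
\[
G(B_{max},p)\;\le\;\frac{1}{2\ln 2}\,\frac{1}{pB_{max}}\;+\;\phi(p),
\]
and the residual $\tfrac{1}{2\ln 2}\,(pB_{max})^{-1}$ forces a case split: the paper handles $pB_{max}\le 34.75$ trivially via $C_{ub}\le 2.58$, and only applies the analytic bound for $pB_{max}>34.75$, where the residual is at most $0.02$. Your telescoped form with the pointwise inequality $\frac{1+a}{1+q^j a}\le q^{-j}$ gives $G(B_{max},p)\le\phi(p)$ outright, uniformly in $B_{max}$, so no case split is needed and you even gain a couple of hundredths of a bit of slack. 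This is a cleaner route to the same endpoint; the price is only that your argument does not directly reuse the intermediate inequalities already recorded in the proof of Theorem~\ref{thm: const_gap_rate}, whereas the paper's version does.
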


The detailed proof is given in the Appendix. Notice that in the proof of Theorem \ref{thm: const_gap_rate}, we have already bounded the gap between $\displaystyle\sum\limits_{j = 0}^{\infty} p(1-p)^j \frac{1}{2} \log\left( 1 + (1-p)^j p B_{max}  \right)$ and $\frac{1}{2}\log\left(1+pB_{max} \right)$ by a constant, so it's no surprise we have a constant bound here. The additional term $K(p)$ in Lemma \ref{lem: LowerBoundC} is why we have a larger constant, i.e. 2.58 bits, here. 

\subsection{The $B_{max} > E$ regime}
\label{subsec: B_greater_E}

In Section \ref{subsec: B_less_E}, we provided an upper bound and a lower bound on the capacity when $B_{max} \le E$. Because the energy must be stored into the battery before it can be used, the extra energy is wasted when $B_{max} \leq E$. Therefore, the  capacity depends only on $B_{max}$ and not $E$ in that case. When $B_{max} > E$, a natural question is how the extra battery space impacts the capacity of the system, and whether $B_{max}$ or $E$ (or both) is the determining factor for the capacity. We show below that the capacity depends critically on $E$ in this case and not so much on the extra battery space $B_{max}$.

The bounds in Section~\ref{subsec: B_less_E} can be summarized as:
\begin{equation}
\label{eq: two_sided}
\frac{1}{2} \log \left(1 + p B_{max} \right) - 2.58 \le C \le \frac{1}{2} \log \left(1 + p B_{max} \right)
\end{equation}
Notice, in the case $B_{max} = E$, this bound turns into:
\begin{equation}
\label{eq: B_equal_E}
\frac{1}{2} \log \left(1 + p E \right) - 2.58 \le C(B_{max} = E, p) \le \frac{1}{2} \log \left(1 + p E \right)
\end{equation}

Note that the right hand side of \eqref{eq: B_equal_E} is the same as the infinite battery capacity given by \eqref{eq: ClassicAWGN}. This gives rise to the following theorem. 

\begin{thm} [Bounds on Capacity: $B_{max} > E$]
\label{thm: B_greater_E}
When the battery size $B_{max} > E$, the capacity of a system with Bernoulli energy arrival process defined in Section \ref{sec: SysModel} is bounded by:
 \begin{equation}
\label{eq: B_greater_E}
\frac{1}{2} \log \left(1 + p E \right) - 2.58 \le C \le \frac{1}{2} \log \left(1 + p E \right)
\end{equation}
\end{thm}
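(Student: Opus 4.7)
The plan is to prove the two inequalities separately, and in both cases to reduce the problem to results already established earlier in the paper rather than redoing any calculation.

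For the upper bound, I would invoke monotonicity of capacity in battery size together with the Ozel--Ulukus infinite-battery characterization. Any feasible coding scheme for battery size $B_{max}$ remains feasible if we enlarge the battery (we can simply ignore the extra storage), so the capacity is nondecreasing in $B_{max}$. Letting $B_{max} \to \infty$, we land on the infinite-battery regime, whose capacity was shown in \cite{OzelUlukus_BInfty} and recalled in \eqref{eq: ClassicAWGN} to equal $\frac{1}{2}\log(1 + \mathbb{E}[E_t]) = \frac{1}{2}\log(1+pE)$. This immediately yields $C \le \frac{1}{2}\log(1+pE)$.

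For the lower bound, the key observation is the \emph{reverse} monotonicity for achievability: any scheme that is feasible under a hypothetical reduced battery size $B'_{max} < B_{max}$ is also feasible in the actual system, since one can simulate the smaller battery by never allowing the stored energy to exceed $B'_{max}$ (discarding the excess). Applying this with $B'_{max} = E$, we inherit the achievability guarantee \eqref{eq: B_equal_E} obtained in Section~\ref{subsec: B_less_E}, so that $C \ge \frac{1}{2}\log(1+pE) - 2.58$. Concretely, this corresponds to using in the original $B_{max}>E$ system the codebook construction of Theorem~\ref{thm: Achievable2} with allocation $\mathcal{E}_j = p(1-p)^j E$. This allocation sums to $E$, and after each energy arrival the battery contents are at least $E$ (since $B_{max}>E$ lets the battery absorb the full arrival), so the scheme is feasible. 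Plugging these $\mathcal{E}_j$ into Lemma~\ref{lem: LowerBoundC} with $B_{max}$ replaced by $E$, and reusing the gap computation from Theorem~\ref{thm: ConstantGapBlessE} verbatim, delivers the claimed $2.58$-bit gap to $\frac{1}{2}\log(1+pE)$.

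I do not expect a genuine obstacle here; the argument is essentially a reduction to the $B_{max}=E$ case, and both directions (upper bound via enlarging, lower bound via artificially shrinking the battery) are intuitive. The only point requiring a moment of care is checking that the achievable scheme of Theorem~\ref{thm: Achievable}, whose derivation in Section~\ref{subsec: B_less_E} implicitly used that each arrival tops the battery off to exactly $B_{max}$, still operates correctly when $B_{max}>E$; this is handled by the simulation-of-a-smaller-battery device described above, which ensures that epochs remain i.i.d.\ and that the per-epoch energy budget of $E$ is never violated.
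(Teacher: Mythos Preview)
Your proposal is correct and matches the paper's proof essentially line for line: the upper bound comes from monotonicity in $B_{max}$ together with the infinite-battery capacity \eqref{eq: ClassicAWGN}, and the lower bound comes from monotonicity in the opposite direction, reducing to the $B_{max}=E$ case and invoking \eqref{eq: B_equal_E}. The additional remarks you make about the explicit allocation $\mathcal{E}_j = p(1-p)^j E$ and its feasibility are accurate elaborations (mirroring the discussion around Theorem~\ref{thm: const_gap_rate2}) but are not needed once monotonicity is granted.
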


\begin{proof}
Since having a larger battery can only help the capacity of the system, the capacity when $B_{max}>E$ is at least as large as the capacity when $B_{max}=E$. Using the left hand side of  \eqref{eq: B_equal_E}, we have: when $B_{max}>E$
\begin{align*}
C  & \ge C(B_{max} = E, p) \\
& \ge \frac{1}{2} \log \left(1 + p E \right) - 2.58
\end{align*}
Similarly, using \eqref{eq: ClassicAWGN}, we have the upper bound:
\begin{align*}
C & \le C(B_{max} = \infty, p) \\
& \leq \frac{1}{2} \log \left(1 + p E \right) 
\end{align*}
This completes the proof of the theorem.
\end{proof}

\section{Generalization to other energy profiles}
\label{sec: General_EP}
The near optimal energy allocation policy and the coding strategy we developed in the earlier sections, as well as the resultant constant gap approximation for the information-theoretic capacity of the AWGN energy harvesting communication channel were specific to the i.i.d. Bernoulli energy arrival process and it may seem difficult to extend these results to the more general settings. In this section, we present a simple way to apply these strategies in the more general settings of  i.i.d. energy arrival processes.  

The idea we propose is very simple: given an i.i.d. energy arrival process where $E_t$ has an arbitrary distribution (discrete or continuous), fix an energy level $E$ and find the probability $p$ of having an energy arrival with packet size at least $E$, i.e. $p=\mathbb{P}(E_t\geq E)$; then apply the near optimal energy allocation policy of Section~\ref{sec: Online_Policy} or the near optimal coding strategy of Section~\ref{sec: Info_Cap} as if the exogenous energy process were i.i.d. Bernoulli with parameters $E$ and $p$. Clearly, this is an energy feasible strategy for the general i.i.d. energy arrival process. However, a priori it may seem highly suboptimal and wasteful of energy since by treating the energy arrival process as Bernoulli with parameters $E$ and $p$, we assume that there is no energy arrival when the arriving energy packet size is smaller than $E$, and we assume that the energy packet size is equal to $E$ each time we receive an energy packet of size larger than or equal to $E$. Effectively, our strategy may not be utilizing a large fraction of the energy accumulating in the battery under the general i.i.d. energy arrival process. However, as we illustrate next this simple strategy turns out to be sufficient to achieve the optimal value for the energy allocation problem in Section~\ref{sec: Online_Policy} and the information-theoretic capacity of the channel within a constant gap for a large class of i.i.d. energy arrival processes, though with a larger constant. Note that as already discussed in Section \ref{sec: simulations}, the earlier strategies proposed in the literature \cite{OzelTutuncuoglu_JSAC,MaoKoksalShroff,SrivastavaKoksal} would fail to provide a constant gap approximation.

We have the following theorem  which is the extension of our main result in Theorem~\ref{thm: MainResult} to the general i.i.d. energy arrival processes. It is straightforward to write down the corresponding extensions of Theorems~\ref{thm: const_gap_rate} and \ref{thm: const_gap_rate2} to this more general setting.

\begin{thm}
\label{thm: Extension}
Assume that the energy arrival process $\lbrace E_t\rbrace$ is an i.i.d. random process with each $E_t$ distributed according to an arbitrary cumulative distribution function $F(x)$ such that $F(x)=0, \forall x<0$. Then for each $0\leq x\leq B_{max}$, the information theoretic capacity $C$ of the corresponding AWGN energy harvesting channel with battery size $B_{max}$ is bounded by
\begin{equation}
\label{eq: mainres_general}
\begin{split}
\frac{1}{2}\log(1+ & x (1-F(x)))  - 2.58 \leq C
\\ &\leq \frac{1}{2}\log\left(1+\int_0^{B_{max}} \! (1-F(y) ) \, \mathrm{d}y \right).
\end{split}
\end{equation}
\end{thm}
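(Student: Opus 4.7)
My plan is to prove the two inequalities separately, in each case reducing to results already established earlier in the paper.

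\emph{Lower bound.} Fix $x\in[0,B_{max}]$, set $p:=1-F(x)$, and define the \emph{virtual} arrival process $\tilde{E}_t:=x\cdot\mathbbm{1}\{E_t\geq x\}$. Because the $E_t$'s are i.i.d., $\{\tilde{E}_t\}$ is i.i.d.\ Bernoulli with packet size $x$ and arrival probability $p$. The transmitter (which observes $E_t$ causally) runs the coding scheme of Lemma~\ref{lem: LowerBoundC} as if the arrivals were $\tilde{E}_t$, maintaining a virtual battery $\tilde{B}_t$ updated only by $\tilde{E}_t$. A short induction using $\tilde{E}_t\leq E_t$ and monotonicity of $\min(\cdot,B_{max})$ gives $\tilde{B}_t\leq B_t$ for all $t$, so any transmission obeying $|X_t|^2\leq\tilde{B}_t$ is feasible in the true battery. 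Because the encoder's outputs depend on $\{\tilde{E}_t\}$ and the message alone, the joint law of $(X^n,Y^n)$ coincides with that of a genuine Bernoulli$(p,x)$ system of battery size $B_{max}$. Since $x\leq B_{max}$, Theorem~\ref{thm: MainResult} applied to that virtual system guarantees an achievable rate of at least $\tfrac{1}{2}\log(1+px)-2.58=\tfrac{1}{2}\log(1+x(1-F(x)))-2.58$.

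\emph{Upper bound.} I would use a cutset-style argument. Let $\tilde{E}_t:=\min(E_t,B_{max})$. The battery update~\eqref{eq: batteryUpdate} caps the stored energy at $B_{max}$, so the energy actually admitted to the battery from arrival $E_{t+1}$ is at most $\tilde{E}_{t+1}$. Summing the battery balance over $t=1,\ldots,n$ and taking expectations gives
\begin{equation*}
\mathbb{E}\!\left[\sum_{t=1}^n|X_t|^2\right]\;\leq\;\sum_{t=1}^n\mathbb{E}[\tilde{E}_t]\;=\;n\,\mathbb{E}[\tilde{E}_t].
\end{equation*}
Chaining this with the standard memoryless-AWGN inequality $I(X^n;Y^n)\leq\sum_t\tfrac{1}{2}\log(1+\mathbb{E}[X_t^2])$, Jensen's inequality, and Fano's lemma yields $C\leq\tfrac{1}{2}\log(1+\mathbb{E}[\tilde{E}_t])$. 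The tail-integral identity $\mathbb{E}[\tilde{E}_t]=\int_0^{B_{max}}(1-F(y))\,dy$ completes the bound.

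The most delicate part will be justifying that the machinery of Lemma~\ref{lem: LowerBoundC} and Theorem~\ref{thm: Achievable2} genuinely transfers to the virtual process, since the coding scheme and its $H(p)+1.04$ slack were designed under the assumption that the arrival process is exogenous and i.i.d.\ Bernoulli, whereas here $\tilde{E}_t$ is a deterministic function of $E_t$. This goes through because $\{\tilde{E}_t\}$ is itself genuinely i.i.d.\ Bernoulli$(p)$; the receiver in our channel has no causal access to the arrivals in either setup; and the encoder simply discards the residual information in $E_t-\tilde{E}_t$. Hence the epoch decomposition, the amplitude-constrained codebooks, and the side-information penalty of Theorem~\ref{thm: Achievable2} all apply verbatim to $\{\tilde{E}_t\}$, and the remainder is routine bookkeeping.
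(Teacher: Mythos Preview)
Your proposal is correct and follows essentially the same approach as the paper: for the lower bound you threshold the arrivals at level $x$ to obtain an i.i.d.\ Bernoulli$(p,x)$ process dominated by the true arrivals and invoke Theorem~\ref{thm: MainResult}, and for the upper bound you truncate the arrivals at $B_{max}$ and bound capacity by the AWGN rate at the mean truncated power, expressed via the tail integral. The only minor difference is that the paper obtains the upper bound by citing the infinite-battery capacity result of \cite{OzelUlukus_BInfty}, whereas you give a self-contained Fano/Jensen argument; both yield the same bound.
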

\begin{proof}
Fix an energy level $x$. The capacity $C$ of our channel can be lower bounded by the capacity of the same channel when the exogenous energy arrival process is i.i.d. Bernoulli with energy  packet size $E=x$ and probability of energy packet arrival $p=1-F(x)$ since we can always discard the additional energy. Then, applying Theorem \ref{thm: MainResult} we immediately get the lower bound in \eqref{eq: mainres_general}. 

For the upper bound, we notice when the arrival energy $E_t > B_{max}$, the extra $E_t - B_{max}$ amount of energy is wasted. Therefore, the available resource for communication for this system is equivalent to a system where $E_t$, is distributed according to a cumulative distribution function:
\begin{align}
\label{eq: equiv_cdf}
\tilde{F}(x) = \left\{ \begin{array}{rl}
0 & \mbox{for $x < 0$} \\
F(x) & \mbox{for $0 \le x < B_{max}$ } \\
1 & \mbox{for $B_{max} \le x$}
\end{array}  \right.
\end{align}

Again, using the infinite battery capacity, we can upper bound the capacity of the system by:
\begin{align*}
C  & \le C_{\infty} \\
 & = \frac{1}{2}\log(1+\mathbb{E}(E_t)) \\
 & \stackrel{(a)}{=} \frac{1}{2} \log\left(1+\int_{0}^{\infty}(1-\tilde{F}(y)) \mathrm{d}y \right) \\
 & \stackrel{(b)}{=} \frac{1}{2} \log\left(1+\int_{0}^{B_{max}}(1-F(y)) \mathrm{d}y \right)
\end{align*}
where (a) comes from a well-known identity relating the cdf of a non-negative random variable to its mean, (b) uses \eqref{eq: equiv_cdf}.
\end{proof}

%For simplicity of notation, we drop the subscript in the cumulative distribution function in the rest of this Section, and it is understood that $F(x) = F_{E_t}(x)$. 
Note that the above theorem holds for any value of $x$ between $0$ and $B_{max}$. For a given energy arrival process, we would want to optimize $x$ to maximize the lower bound on capacity. In other words, the lower bound in the theorem can be equivalently rewritten as
\begin{equation}
\label{eq: gen_lower_bound}
{\frac{1}{2}\log\left( 1+ \sup_{0\leq x \leq B_{max}}{x (1-F(x))}\right) - 2.58} \leq C .
\end{equation}
Using this lower bound, the approximation gap in the theorem (the difference between the lower and upper bounds) is given by
\begin{equation}
\label{eq: gen_gap}
\frac{1}{2}\log\left( \frac{1+\int_{0}^{B_{max}}(1-F(y)) \mathrm{d}y}{1+ \displaystyle\sup\limits_{0\leq x \leq B_{max}}{x (1-F(x))}} \right) +2.58,
\end{equation}   
which can be upper bounded by
\begin{equation}
\label{eq: ratio}
\frac{1}{2}\log\left( \frac{\int_{0}^{B_{max}}(1-F(y)) \mathrm{d}y}{\displaystyle\sup\limits_{0\leq x \leq B_{max}}{x (1-F(x))}} \right) +2.58.
\end{equation}
since the numerator is always greater than or equal to the denominator inside the log in Equation \eqref{eq: ratio}. 

Figure \ref{fig: one-CDF} illustrates the ratio inside the logarithm for a given energy arrival process, characterized by the graph $1-F(x)$ between $0$ and $B_{max}$. The numerator in this fraction is the area under this graph, while the denominator is the largest area of a rectangle lying below this graph (as illustrated by the shaded rectangle). Therefore, as long as the cumulative distribution function has the property that this ratio is not too large, Theorem~\ref{thm: Extension} will yield a constant gap approximation of the capacity. As we illustrate next, two examples of such distributions are the uniform distribution and an energy arrival process with two discrete energy levels (as opposed to one energy level in the Bernoulli case). For these distributions we get a total approximation gap of $3.08$  bits (as opposed to $2.58$ bits in the Bernoulli case). However, not all cumulative distribution functions yield a constant gap approximation. We also provide a counter example (a sequence of energy profiles) for which the approximation gap provided by the theorem becomes arbitrarily large.

\begin{figure}[t!]
\centering
\scalebox{0.5}
{\includegraphics{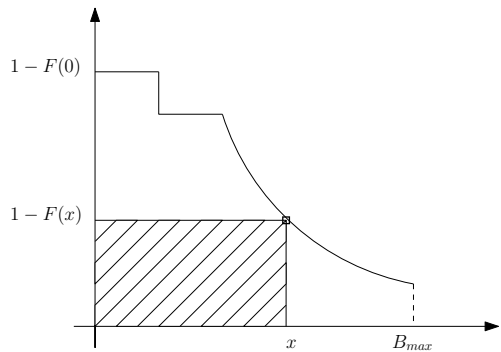}}
\caption{The graph of $1-F(x)$ for an arbitrary distribution. The approximation gap for the capacity is related to the ratio between the area under this graph and the area of the largest rectangle lying below the graph.}
\label{fig: one-CDF}
\end{figure}

\begin{figure}[t!]
\centering
\scalebox{0.53}
{\includegraphics{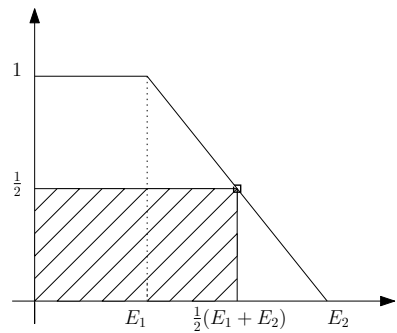}}
\caption{The graph of $1-F(x)$ for a uniform distribution. Observe that choose $x=(A_1+A_2)/2$ ensures the ratio of the area under the graph to the corresponding rectangle's area to be $2$ for all $A_1, A_2$.}
\label{fig: Uniform-CDF}
\end{figure}

\subsection{Uniform Distribution}
Assume the energy arrival process is i.i.d. with $E_t$ uniformly distributed over the interval $[A_1,A_2]$ for some arbitrary $0 \le A_1 < A_2$. We first assume that $B_{max} \geq A_2$. We have
\begin{equation}
\begin{split}
\int_{0}^{B_{max}}(1-F(y)) \mathrm{d}y &= \frac{A_1 +A_2}{2}
\\ & = 2\times\frac{1}{2}\times\frac{A_1 +A_2}{2}
\\ &\leq 2\times {\sup_{0\leq x \leq B_{max}}{x (1-F(x))}}.
\end{split}
\end{equation}
we have the inequality in the last step because by choosing $x=(A_1 +A_2)/2$, we can achieve ${x (1-F(x))}=(A_1 +A_2)/4$ (See Figure \ref{fig: Uniform-CDF}).  With this choice of energy level $E$, i.e. the midpoint of the interval $[A_1, A_2]$, the ratio inside the log in Equation \eqref{eq: ratio} is guaranteed to be a constant of $2$ for any i.i.d. uniformly distributed energy arrival process independent of the values of $A_1$ and $A_2$.
In particular, Theorem \ref{thm: Extension} will approximate the capacity as 
\begin{equation}
\label{eq: upper_uniform}
\frac{1}{2}\log\left(1+ \frac{A_1+A_2}{2}\right)\quad\text{bits}
\end{equation}
when $B_{max} \geq A_2$ within a gap of $3.08$ bits.

When $(A_1+ A_2)/2 \leq B_{max}< A_2$, we can again choose $x =(A_1 + A_2)/2$, and achieve within 3.08 bits of \eqref{eq: upper_uniform}, which clearly is also an upper bound for this case. When $A_1 \le B_{max}  < (A_1+A_2)/2$, we can no longer choose $x = (A_1 + A_2)/2$, since $x$ must be no more than $B_{max}$. However, in this case, if we simply choose $x = B_{max}$, the ratio of interest is still no more than $2$, therefore, Theorem \ref{thm: Extension} will approximate the capacity as $1/2 \log(1 + B_{max})$ within 3.08 bits. Finally, when $B_{max} \le A_1$, we are back to a degenerate Bernoulli case with arrival probability $p=1$, and packet size $B_{max}$, so we have upper bound $1/2 \log(1 + B_{max})$, and lower bound of $1/2 \log(1 + B_{max}) - 2.58$.

We summarize the approximate capacity and the approximation gap  for various regimes in the  following table:

\begin{table}[H]
\begin{center}
\caption{Approximate Capacity for AWGN Energy Harvesting Channel with Uniform Energy Arrival Distribution}
\begin{tabular}{|c|c|c|}
\hline
Regime & Approximate Capacity & Gap \\
\hline\hline
$\frac{A_1 + A_2}{2} \le B_{max}$ & $\frac{1}{2} \log \left(1 + \frac{A_1+A_2}{2} \right)$ & $\le 3.08$ \\
\hline
$A_1 \le B_{max} < \frac{A_1 + A_2}{2}$ & $\frac{1}{2}\log\left(1 + B_{max} \right)$ & $\le 3.08$\\
\hline 
$B_{max} < A_1$ & $\frac{1}{2} \log \left(1 + B_{max}\right)$  & $\le 2.58$ \\
\hline
\end{tabular}
\label{table: Uniform}
\end{center}
\end{table}

Similar to the Bernoulli arrival process, based on Table \ref{table: Uniform}, we infer there are two qualitatively different regimes for the capacity of a system with uniform arrival process. In particular, when $B_{max} < (A_1 + A_2)/2$, the capacity is increasing roughly logarithmic in $B_{max}$, while when $B_{max} \geq (A_1 + A_2)/2$ the capacity approximately saturates to $\frac{1}{2} \log \left(1 + \frac{A_1+ A_2}{2} \right)$. Perhaps, what's interesting here is that the threshold where this regime shift happens is neither $A_1$ nor $A_2$ by itself, but instead happens at the midpoint between $A_1$ and $A_2$.   

\subsection{k-Level Distribution}\label{subsec: CE}
Assume now that the energy arrival process is again i.i.d. but $E_t$ is a discrete random variable that takes the value $A_i$ with probability $p_i$  for $i=1,2,\ldots ,k$. Further, assume that $0 < A_1 < A_2 < \cdots < A_k \le B_{max}$. Note that the probability of having no energy packets is $1-\sum_{i=1}^{k}{p_i}$. Then we have
\begin{align*}
\int_{0}^{B_{max}}(1-F(y)) \mathrm{d}y &= \sum_{i=1}^{k}{p_i A_i}
\\ &\le \sum_{i=1}^{k}{A_i \sum_{j=i}^{k}{p_j}}
\\ &\le k\max_{1\leq i\leq k}{A_i \sum_{j=i}^{k}{p_j}}
\\ &= k {\sup_{0\leq x \leq B_{max}}{x (1-F(x))}}.
\end{align*}
The last equality holds because the supremum is achieved when $x$ in Theorem~\ref{thm: Extension} approaches $A_i$ from left for which ${A_i \sum_{j=i}^{k}{p_j}}$ is maximized. This shows we can approximate the capacity as $\frac{1}{2}\log(1+\mathbb{E}[E_t])$, when $0 < A_i \le B_{max}$ for all $i$, within a gap of at most 
\begin{equation}
\label{eq: k_level_gap}
\frac{1}{2}\log(k)+2.58\quad\text{bits}.
\end{equation}

For example, for $k=1$, i.e. the Bernoulli case, we recover the gap of $2.58$ bits. For the case $k=2$, we obtain an approximation gap of $3.08$ bits. Note that Equation \eqref{eq: k_level_gap} is just an upper bound on the gap, and not necessarily tight. In particular, not all classes of distributions have gap increasing with k. For example, if all the $p_i$'s are the same and $A_i$'s are equally spaced within an interval $[A_a, A_b]$. Then as $k \to \infty$, the distribution of $E_t$ starts to approach a uniform distribution we examined in the previous subsection, which we know has a bounded gap of $3.08$ bits. However, there does exist sequence of profiles, where as $k$ increases, our approximation gap increases unboundedly. Indeed, in the last subsection, we will find an example of discrete profiles for which the gap can be arbitrarily large.

If $A_{i-1}\leq B_{max} < A_i$, we can consider the equivalent distribution where we replace all $A_j$, $j\geq i$ with $B_{max}$ and assign a probability mass $\sum_{j=i}^{k}{p_j}$ to $B_{max}$. This case can be handled similarly as we did above but note that in this case the expression for the approximate capacity will depend on $A_j$ and $p_j$'s for $j < i$ as well as $B_{max}$.

\subsection{Counterexample}
While the strategy has worked well for the above cases, here we provide a sequence of profiles for which the gap between the upper and lower bounds in Theorem \ref{thm: Extension} can be made arbitrarily large. Define a sequence of profiles with cdf $\{F_n(x)\}_{n=1}^{\infty}$ as
\begin{align*}
1-F_n(x) = \left\{ \begin{array}{rl}
1 & \mbox{$x < 1$} \\
\frac{1}{x} & \mbox{$1\leq x < n$} \\
0 & \mbox{$n \le x$}
\end{array}  \right.
\end{align*}
Now, suppose we have a channel with energy arrival process described by the cdf $F_n(x)$. Without loss of generality, let $B_{max} = n$, i.e. the process takes on a support in $[0, B_{max}]$. Then,
\begin{align}
\frac{1+\int_{0}^{B_{max}}(1-F_n(y)) \mathrm{d}y}{1+{\displaystyle\sup\limits_{0\leq x \leq B_{max}}{x (1-F_n(x))}}} & \stackrel{(a)}{=} \frac{1+\int_{0}^{1}1 \mathrm{d}y +\int_{1}^{n}\frac{1}{y} \mathrm{d}y}{1+1} \notag \\ 
&= 1+\frac{\ln(n)}{2} \label{eq: unbounded_ratio}
\end{align}
where (a) is true due to the way we constructed $F_n(x)$, in particular:
\begin{align*}
x(1-F_n(x)) = \left\{ \begin{array}{rl}
x & \mbox{for $x < 1$} \\
1 & \mbox{for $1 \le x < n$ } \\
0 & \mbox{for $n \le x $}
\end{array}  \right.
\end{align*}
Substitute the ratio from \eqref{eq: unbounded_ratio} into Equation \eqref{eq: gen_gap}, we obtain a gap between the upper and lower bounds in Theorem \ref{thm: Extension} of:
\begin{equation}
\label{eq: unbounded_gap}
\frac{1}{2}\log\left(1 + \frac{\ln(n)}{2} \right) + 2.58
\end{equation}
which can be made arbitrarily large as $n$ grows unboundedly. This suggests that the approximation gap using our strategy can not be bounded by a constant as we did in the previous examples.  Alternatively, the unboundedness of the ratio can be seen from Figure \ref{fig: Counterexample} where the area of every shaded bounded rectangle is no more than $1$, but the area under the graph becomes arbitrarily large as $n$ goes to infinity. 

\begin{figure}[t!]
\centering
\scalebox{0.5}
{\includegraphics{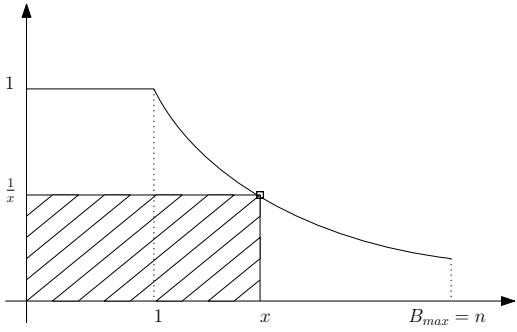}}
\caption{The graph of $1-F_n(x)$. Note that the area under the graph goes to infinity as $n$ goes to infinity, but the hatched area $x(1-F(x))$ is no more than $1$ for every $x$.}
\label{fig: Counterexample}
\end{figure}

As mentioned earlier, it is possible to  come up with a discrete counterexample by discretizing this sequence of profiles at integer levels. In particular, let the energy arrival process be a $k$-level distribution with $A_i= i$ for $1 \le i \le k$, $p_i=\frac{1}{i}-\frac{1}{i+1}$ for $1 \le i \le k-1$ and $p_k = \frac{1}{k}$. Again, let $B_{max}=k$ to avoid complication due to truncation of the cdf. Then the ratio inside the log in \eqref{eq: gen_gap} can be written as:
\begin{align*}
\frac{1+\int_{0}^{B_{max}}1-F_k(y) \mathrm{d}y}{1+{\displaystyle\max\limits_{1\leq i \leq k}{A_i (1-F_k(A_i^-))}}} &= \frac{1+\sum_{i=1}^{k}{\frac{1}{i}}}{1+\displaystyle\max\limits_{1\leq i \leq k}{i \times \frac{1}{i}}}
\\ &= \frac{1+\sum_{i=1}^{k}{\frac{1}{i}}}{2}
\\ &\geq \frac{1+\ln(k+1)}{2}
\end{align*}
where $F_k(A_i^-) = \displaystyle\lim_{x \uparrow A_i} F_k(x)$. This shows the ratio can again grow unboundedly as $k \to \infty$. The counterexamples in this subsection show that there are both discrete and continuous profiles, for which our strategy can not achieve the upper bounds within a constant gap, although such profiles may not be common in practice. A natural future direction is to obtain approximate characterizations of the energy harvesting communication channel under general energy harvesting profiles.

\section*{Acknowledgment}
We would like to thank S. Ulukus for stimulating discussions on the topic and the anonymous reviewers for their detailed comments and suggestions which significantly improved the paper. 

\bibliographystyle{ieeetr}
\bibliography{manuscript}

\begin{thebibliography}{10}

\bibitem{YangUlukus}
J.~Yang and S.~Ulukus, ``Optimal packet scheduling in an energy harvesting
  communication system,'' {\em IEEE Trans. Commun.}, vol.~60, no.~1,
  pp.~220--230, Jan 2012.

\bibitem{TutuncuogluYener}
K.~Tutuncuoglu and A.~Yener, ``Optimum transmission policies for battery
  limited energy harvesting nodes,'' {\em Wireless Communications, IEEE
  Transactions on}, vol.~11, no.~3, pp.~1180--1189, 2012.

\bibitem{OzelUlukus_BInfty}
O.~Ozel and S.~Ulukus, ``Achieving {AWGN} capacity under stochastic energy
  harvesting,'' {\em Information Theory, IEEE Transactions on}, vol.~58,
  no.~10, pp.~6471--6483, 2012.

\bibitem{OzelTutuncuoglu_JSAC}
O.~Ozel, K.~Tutuncuoglu, J.~Yang, S.~Ulukus, and A.~Yener, ``Transmission with
  energy harvesting nodes in fading wireless channels: Optimal policies,'' {\em
  IEEE JSAC}, vol.~29, no.~8, pp.~1732--1743, Sep 2011.

\bibitem{MahdaviYates}
H.~Mahdavi-Doost and R.~Yates, ``Energy harvesting receivers: Finite battery
  capacity,'' {\em in Proc. Int. Symp. on Information Theory, Istanbul}, 2013.

\bibitem{MaoHassibi}
W.~Mao and B.~Hassibi, ``On the capacity of a communication system with energy
  harvesting and a limited battery,'' {\em in Proc. Int. Symp. on Information
  Theory, Istanbul}, 2013.

\bibitem{DeekshithSarma}
D.~P. K., V.~Sarma, and R.~Rajesh, ``{AWGN} channel capacity of energy
  harvesting transmitters with a finite energy buffer,'' July 2013.

\bibitem{TutuncuogluOzel_ISIT}
K.~Tutuncuoglu, O.~Ozel, A.~Yener, and S.~Ulukus, ``Binary energy harvesting
  channel with finite energy storage,'' {\em in Proc. Int. Symp. on Information
  Theory, Istanbul}, 2013.

\bibitem{OzelUlukus_Bzero}
O.~Ozel and S.~Ulukus, ``{AWGN} channel under time-varying amplitude
  constraints with causal information at the transmitter,'' {\em in Proc. 45th
  Asilomar Conf. on Signals, Systems and Computers}, Nov 2011.

\bibitem{KhuzaniSaffar}
M.~B. Khuzani, H.~E. Saffar, E.~H.~M. Alian, and P.~Mitran, ``On optimal online
  power policies for energy harvesting with finite-state markov channels,''
  {\em in Proc. Int. Symp. on Information Theory, Istanbul}, 2013.

\bibitem{MaoKoksalShroff}
Z.~Mao, C.~E. Koksal, and N.~B. Shroff, ``Near optimal power and rate control
  of multi-hop sensor networks with energy replenishment: Basic limitations
  with finite energy and data storage,'' {\em Automatic Control, IEEE
  Transactions on}, vol.~57, no.~4, pp.~815--829, April 2012.

\bibitem{SrivastavaKoksal}
R.~Srivastava and C.~E. Koksal, ``Basic performance limits and tradeoffs in
  energy harvesting sensor nodes with finite data and energy storage,'' {\em
  Networking (TON), IEEE/ACM Transactions on}, vol.~21, no.~4, pp.~1049--1062,
  2013.

\bibitem{Jafar_CSIR}
S.~Jafar, ``Capacity with causal and noncausal side information: A unified
  view,'' {\em Information Theory, IEEE Transactions on}, vol.~52, no.~12,
  pp.~5468--5474, 2006.

\bibitem{Han}
T.~S. Han, {\em Information-Spectrum Methods in Information Theory}.
\newblock Springer-Verlag Berlin Heidelberg, 2003.

\bibitem{Shannon}
C.~Shannon, ``Channels with side information at the transmitter,'' {\em IBM
  Journal of Research and Development}, vol.~2, no.~4, pp.~289--293, Oct. 1958.

\bibitem{Smith}
J.~G. Smith, ``The information capacity of amplitude and variance-constrained
  scalar {Gaussian} channels,'' {\em Information and Control}, vol.~18,
  pp.~203--219, April 1971.

\bibitem{OzarowWyner}
L.~H. Ozarow and A.~D. Wyner, ``On the capacity of the {Gaussian} channel with
  a finite number of input levels,'' {\em Information Theory, IEEE Transactions
  on}, vol.~36, no.~6, pp.~1426--1428, 1990.

\bibitem{DasNarayan}
A.~Das and P.~Narayan, ``Capacities of time-varying multiple-access channels
  with side information,'' {\em Information Theory, IEEE Transactions on},
  vol.~48, no.~1, pp.~4--25, Jan. 2002.

\end{thebibliography}

\section*{Appendix}
\label{sec: App}

\begin{proof}[Proof of Theorem \ref{thm: const_gap_rate}]
Here, we provide the detailed proof to the three steps mentioned in the main context. 

{\it Step One:}
Since we are in the case $B_{max} \le E$, each time a non-zero energy packet arrives, the battery will be filled up completely regardless of how much energy was remaining in the battery. In particular, at least $E - B_{max}$ amount of energy is wasted in every non-zero incoming energy packet. Therefore, a system with Bernoulli arrival process $\tilde{E}_t$ where the energy packet size is $\tilde{E} = E - (E- B_{max}) = B_{max}$ is completely equivalent to our original system in terms of the amount of the available energy for communication.

Now, let $g(t)$ be any policy in $\mathcal{G}$. We have: 
\begin{align*}
\liminf_{N \to \infty} \mathbb{E} & \left[ \frac{1}{N}  \displaystyle\sum\limits_{t = 1}^N \frac{1}{2} \log\left( 1 + g( t)\right) \right]
\\
& \stackrel{(a)}{\leq} \liminf_{N \to \infty} \mathbb{E} \left[  \frac{1}{2} \log\left( 1 + \frac{1}{N} \displaystyle\sum\limits_{t = 1}^N  g( t)\right) \right] \\
& \stackrel{(b)}{\leq} \liminf_{N \to \infty} \mathbb{E} \left[  \frac{1}{2} \log\left( 1 +   \frac{1}{N} \displaystyle\sum\limits_{t = 1}^N  \tilde{E}_{t} \right) \right] \\
& \stackrel{(c)}{=} \mathbb{E} \left[  \liminf_{N \to \infty} \frac{1}{2} \log\left( 1 +  \frac{1}{N} \displaystyle\sum\limits_{t = 1}^N  \tilde{E}_{t}\right) \right] \\
& = \; \frac{1}{2} \log\left( 1 + pB_{max}\right)
\end{align*} 
where (a) follows from the concavity of the log which allows to apply Jensen's Inequality; (b) follows from the argument at the beginning that we can equivalently consider a Bernoulli arrival process $\tilde{E}_t$ where the energy packet is $\tilde{E} = B_{max}$  and the fact that $g(t)$ is feasible, so we can not spend more energy up to time $N$ than the amount of exogenous energy we receive by time $N$, i.e.
$$
\sum\limits_{t = 1}^N  g( t)\leq \sum\limits_{t = 1}^N  \tilde{E}_{t};
$$
(c) follows from the Dominated Convergence Theorem. The Dominated Convergence Theorem holds  because first,
\begin{align*}
 \lim_{N \to \infty} \frac{1}{2} &\log\left( 1 +  \frac{1}{N} \displaystyle\sum\limits_{t = 1}^N  \tilde{E}_{t}\right)\\ &\stackrel{(d)}{=}  \frac{1}{2} \log\left( 1 +  \liminf_{N \to \infty} \frac{1}{N} \displaystyle\sum\limits_{t = 1}^N  \tilde{E}_{t} \right) \\
&  \stackrel{(e)}{=} \frac{1}{2} \log\left( 1 +  \mathbb{E}[\tilde{E}_{t}] \right) \\
& = \; \frac{1}{2} \log\left( 1 + pB_{max}\right)
\end{align*} 
almost surely, where (d) follows from the fact $\frac{1}{2}\log(1+x)$ is smooth and monotonically increasing so we can move the $\lim \inf$ inside;
and (e) follows from the Law of Large Numbers; and second 
$$
\frac{1}{2} \log \left( 1 + \frac{1}{N} \displaystyle\sum\limits_{t=1}^{N} \tilde{E}_{t} \right) \le \frac{1}{2} \log \left(1 + B_{max} \right) 
$$
for all $N$ which is finite. Since the above upper bound applies to all feasible policies $g \in \mathcal{G}$, we have the desired upper bound \eqref{eq: policy_upper}.

{\it Step Two:}
Let $\{T_i \}_{i = 1}^L$ be the inter-arrival times between the $i$'th and $i+1$'th non-zero energy packets, where $L$ is the total number of non-zero packets received by time instance N, i.e. $\sum_{i=1}^L T_i \le N < \sum_{i=1}^{L+1} T_i$. Notice that for fixed $N$,  $T_i$'s and $L$ defined in this way are random variables which are functions of $N$ and the random energy arrival process $\{E_t\}_{t=0}^N$. We can lower bound the rate achieved by $g'(t)$ in terms of these new variables as

\begin{align}
 &\liminf_{N \to \infty} \; \mathbb{E}  \left[ \frac{1}{N} \displaystyle\sum\limits_{t = 1}^N \frac{1}{2} \log\left( 1 + g^\prime(t)\right) \right] \nonumber\\
& \;\;\geq \; \liminf_{N \to \infty}  \mathbb{E} \left[   \displaystyle\sum\limits_{i = 1}^L \displaystyle\sum\limits_{j = 0}^{T_i-1} \frac{1}{2} \log\left( 1 + \tilde{g}(j)\right) \bigg / \displaystyle\sum\limits_{i = 1}^{L+1} T_i \right]\label{eq:lln1}
\end{align}
which follows from the fact that the strategy $g^\prime(t)$ we consider is of the form $g^\prime(t) = \tilde{g}(j)$ where $j = t - \max\{t'\le t: E_{t'} = E \}$, i.e., the strategy is invariant across different epochs and the allocated energy depends only on the number of time steps since the last energy arrival.

Notice that as $N \to \infty$, $L \to \infty$ with probability 1. Divide both the numerator and the denominator of the last equation by $L$ and apply the Law of Large Numbers to both the numerator and the denominator, we obtain 
\begin{align}
 &   \liminf_{N \to \infty} \;\displaystyle\sum\limits_{i = 1}^L \displaystyle\sum\limits_{j = 0}^{T_i-1} \frac{1}{2} \log\left( 1 + \tilde{g}(j)\right) \bigg / \displaystyle\sum\limits_{i = 1}^{L+1} T_i \nonumber \\
&\;\; = \;\; \mathbb{E}\left[ \displaystyle\sum\limits_{j = 0}^{T_1-1} \frac{1}{2} \log\left( 1 + \tilde{g}(j)\right)\right]  \bigg / \mathbb{E}[T_1].\label{eq:lln0}
\end{align} 
Note that  $\{ T_i \}$'s are i.i.d. Geometric($p$) so the Law of Large Numbers is directly applicable to the denominator and it  also applies to the numerator since the random variables $\sum\limits_{j = 0}^{T_i-1} \frac{1}{2} \log\left( 1 + \tilde{g}(j)\right)$ are i.i.d. with finite mean.  Now, because the sequence of random variables in \eqref{eq:lln0} converges almost surely, and for every $N$ the sequence is upper bounded, we can apply the Dominated Convergence Theorem to exchange the limit and the expectation in \eqref{eq:lln1} to obtain
\begin{align*}
 \liminf_{N \to \infty}\;  &\mathbb{E}  \left[ \frac{1}{N} \displaystyle\sum\limits_{t = 1}^N \frac{1}{2} \log\left( 1 + g^\prime(t)\right) \right] \\
& \geq \mathbb{E}\left[ \displaystyle\sum\limits_{j = 0}^{T_1-1} \frac{1}{2} \log\left( 1 + \tilde{g}(j)\right)\right]  \bigg / \mathbb{E}[T_1].
\end{align*}

We have 
\begin{align*}
\mathbb{E}&\left[ \displaystyle\sum\limits_{j = 0}^{T_1-1} \frac{1}{2} \log\left( 1 + \tilde{g}(j)\right)\right]  \bigg / \mathbb{E}[T_1]\;\;\;\\
 =  \;\; & p \displaystyle\sum\limits_{i = 1}^{\infty} \mathbb{P}(T_1 = i) \displaystyle\sum\limits_{j = 0}^{i-1} \frac{1}{2} \log(1+\tilde{g}(j)) \\
\stackrel{(a)}{=} \;\; & p \displaystyle\sum\limits_{i = 1}^{\infty} (1-p)^{i-1} p \displaystyle\sum\limits_{j = 0}^{i-1} \frac{1}{2} \log(1+\tilde{g}(j)) \\
  \stackrel{(b)}{=} \;\; &  p \displaystyle\sum\limits_{j = 0}^{\infty} \left( \displaystyle\sum\limits_{i = j+1}^{\infty}  (1-p)^{i-1} p \right) \frac{1}{2} \log(1+\tilde{g}(j)) \\
   \stackrel{(c)}{=} \;\;  & \displaystyle\sum\limits_{j = 0}^{\infty} p(1-p)^j \frac{1}{2} \log(1 + \tilde{g}(j)) \\
   = \;\; &  \displaystyle\sum\limits_{j = 0}^{\infty} p(1-p)^j \frac{1}{2} \log(1 + p(1-p)^j B_{max}) 
\end{align*} 
where (a) follows from the fact that $\{ T_1 \}$ is Geometric($p$), (b) follows from switching the order of summations, and (c) uses the formula for the sum of geometric series.  

{\it Step Three:}
Finally, to complete the proof, we need to show:
\begin{align}
\label{eq: constant_gap_upper}
\frac{1}{2}  \log(1 + p B_{max}) - \displaystyle\sum\limits_{j = 0}^{\infty}  p (1-p)^j  \frac{1}{2} \log(1 + \tilde{g}(j)) \le 0.973
\end{align}

In order to show \eqref{eq: constant_gap_upper}, we will first restrict the range of $(B_{max}, p)$ to be considered by noticing:
\begin{align*}
\frac{1}{2}  \log(1 + p B_{max}) - & \displaystyle\sum\limits_{j = 0}^{\infty}  p (1-p)^j  \frac{1}{2} \log(1 + \tilde{g}(j)) \\
& \le \frac{1}{2}  \log(1 + p B_{max}) \\
& \le 0.973
\end{align*}
for all $(B_{max}, p)$ such that $pB_{max} \le 2.853$. 

So let's restrict ourself to the set of $(B_{max}, p)$ such that $pB_{max} > 2.853$. We have
\begin{align*}
& \frac{1}{2}  \log(1 + p B_{max}) - \displaystyle\sum\limits_{j = 0}^{\infty}  p (1-p)^j  \frac{1}{2} \log(1 + \tilde{g}(j)) \\
= \; & \frac{1}{2}  \log(1 + p B_{max}) \\
& - \displaystyle\sum\limits_{j = 0}^{\infty}  p (1-p)^j  \frac{1}{2} \log(1 + p(1-p)^j B_{max}) \\
\stackrel{(a)}{\le} \; & \frac{1}{2} \log(1 + p B_{max}) - \displaystyle\sum\limits_{j = 0}^{\infty}  p (1-p)^j  \frac{1}{2} \log(p(1-p)^j B_{max})) \\
 = \; & \frac{1}{2}\log(p) + \frac{1}{2}\log(B_{max}) + \frac{1}{2}\log\left(\frac{1}{pB_{max}} + 1\right) \\
& - \displaystyle\sum\limits_{j = 0}^{\infty} p(1-p)^j \frac{1}{2} \left[\log(p) +  j\log\left(1-p\right) +\log( B_{max})\right] \\
\stackrel{(b)}{=} \; & \frac{1}{2}\log(p) + \frac{1}{2}\log(B_{max}) +  \frac{1}{2}\log\left(\frac{1}{pB_{max}} + 1\right) \\
& - \frac{1}{2}\log(p) -  \frac{1-p}{p} \frac{1}{2} \log\left(1-p\right)  -\frac{1}{2} \log( B_{max}) \\
= \; &   \frac{1}{2}\log\left(\frac{1}{pB_{max}} + 1\right) -  \frac{1-p}{p} \frac{1}{2} \log\left(1-p\right) \\
\stackrel{(c)}{\le}  \; & \frac{1}{2\ln(2)}\frac{1}{pB_{max}} -  \frac{1-p}{p} \frac{1}{2} \log\left(1-p\right) \\
\stackrel{(d)}{\le}  \; & \frac{1}{2\ln(2)}\frac{1}{2.853} +  \frac{1-p}{2p} \log\left(\frac{1}{1-p}\right) \\
\stackrel{(e)}{\le} \; & 0.253 + 0.72 \\
= \; & 0.973
\end{align*}
where (a) follows from the fact that removing the $1$ inside the second $\log$ results in an upper bound; (b) uses the identity $\displaystyle\sum\limits_{j=0}^{\infty} j \cdot p(1-p)^j  = (1-p) \mathbb{E}\left[X\right] =  \frac{1-p}{p}$, where $X \sim$  Geometric($p$), and $\displaystyle\sum\limits_{j = 0}^{\infty} p(1-p)^j = 1$; (c) uses the inequality $\ln(1 + x) \le x$ for all $x$; (d) follows from the fact we restrict ourself to the case $pB_{max} > 2.853$; and  finally (e) follows from the fact that
\begin{align*}
g(p) & \triangleq \frac{1-p}{2p} \log\left(\frac{1}{1-p}\right)
\end{align*}
is a continuous bounded function on $p \in (0,1)$. Furthermore, it's monotonically decreasing and is upper bounded by $\displaystyle\lim_{p \to 0} g(p) = \frac{1}{2\ln(2)} = 0.72$. 

\end{proof}

\begin{proof}[Proof of Theorem \ref{thm: const_gap_rate2}]
The highlight of the proof has been outlined in the comment after the statement of Theorem \ref{thm: const_gap_rate2} in Section \ref{subsec: rate_B_greater_E}. The proof is essentially the same as Theorem \ref{thm: const_gap_rate} with $B_{max}$ replaced by E. Therefore, we will simply reiterate the major steps of the proof without boring the reader with all the details.

{\it Step One:} Using Jensen's Inequality, Dominated Convergence Theorem and Law of Large Numbers, we have the following upper bound: 
\begin{align*}
\max_{g \in \mathcal{G}} \liminf_{N \to \infty} \mathbb{E} \left[ \frac{1}{N} \displaystyle\sum\limits_{t = 1}^N \frac{1}{2} \log\left( 1 + g(t)\right) \right] \leq \frac{1}{2}\log(1+pE)
\end{align*}

%
%Let $g(t)$ be any policy in $\mathcal{G}$. We have: 
%\begin{align*}
%\liminf_{N \to \infty} \frac{1}{N} & \displaystyle\sum\limits_{t = 1}^N \frac{1}{2} \log\left( 1 + g( t)\right)
%\\
%& \stackrel{(a)}{\leq} \liminf_{N \to \infty}  \frac{1}{2} \log\left( 1 + \frac{1}{N} \displaystyle\sum\limits_{t = 1}^N  g( t)\right) \\
%& \stackrel{(b)}{=} \frac{1}{2} \log\left( 1 +  \liminf_{N \to \infty} \frac{1}{N} \displaystyle\sum\limits_{t = 1}^N  g( t)\right) \\
%&  \stackrel{(c)}{\leq}  \frac{1}{2} \log\left( 1 +  \liminf_{N \to \infty} \frac{1}{N} \displaystyle\sum\limits_{t = 1}^N  E_{t} \right) \\
%&  \stackrel{(d)}{=} \frac{1}{2} \log\left( 1 +  \mathbb{E}[E_{t}] \right) \\
%& = \; \frac{1}{2} \log\left( 1 + pE\right)
%\end{align*} 
%where (a) follows from the fact $\frac{1}{2}\log(1+x)$ is concave and the Jensen's Inequality; (b) follows from the fact $\frac{1}{2}\log(1+x)$ is smooth and monotonically increasing so we can move the $\lim \inf$ inside; (c) follows from the fact $g(t)$ is feasible, so we can not spend more energy up to time $N$ than the amount of exogenous energy we have received by time $N$, i.e.
%$$
%\sum\limits_{t = 1}^N  g( t)\leq \sum\limits_{t = 1}^N E_{t};
%$$
%and (d) follows from the Law of Large Numbers.
%
%Since the above upper bound applies to all feasible policies $g \in \mathcal{G}$, we have the desired upper bound.

{\it Step Two:}
Replace $B_{max}$ by $E$ in Step Two in the proof of Theorem \ref{thm: const_gap_rate} and follow the exact same sequence of arguments, we can lower bound the rate achieved by $g^\prime(t)$ as:
\begin{align*}
\liminf_{N \to \infty}  & \; \mathbb{E} \left[ \frac{1}{N} \displaystyle\sum\limits_{t = 1}^N \frac{1}{2} \log\left( 1 + g^\prime(t)\right) \right] \\
& \ge  \displaystyle\sum\limits_{j = 0}^{\infty} p(1-p)^j \frac{1}{2} \log(1 + p(1-p)^j E) 
\end{align*}

{\it Step Three:}
Finally, replace $B_{max}$ everywhere by $E$ in Step Three in the proof of Theorem \ref{thm: const_gap_rate}, we have the following bound:
\begin{align*}
\frac{1}{2}  \log(1 + p E) - \displaystyle\sum\limits_{j = 0}^{\infty}  p (1-p)^j  \frac{1}{2} \log(1 +  p(1-p)^j & E ) \\
  & \le 0.973
\end{align*}
which completes the proof.

\end{proof}

\begin{proof}[Proof of Theorem \ref{thm: Achievable}]
We want to show that for any $\{ \mathcal{E}_j\}$ satisfying \eqref{eq: EnergyConstr} and any $\epsilon > 0$, we can construct a communication strategy that achieves a rate
\begin{align*}
 R \ge \displaystyle\sum\limits_{j = 0}^{\infty} p & (1-p)^j \max_{p(x): |X|^2 \le \mathcal{E}_j} I(X; Y) - \epsilon
\end{align*} 
and has arbitrarily small probability of error.

We start by fixing a positive integer $M(\epsilon)$, and positive real numbers $\delta_2(\epsilon, M)$ and $\delta_1(\epsilon, M, \delta_2)$ such that
\begin{align}
& (1 - p)^{M+1} K \le \epsilon/3  \label{eq: M_Constraint} \\
\frac{\delta_2}{1+\delta_2} & \left( 1 - (1-p)^{M+1} \right) K \le \epsilon/3 \label{eq: delta2_Constraint} \\
& \delta_1 \cdot \frac{(M+1)p}{1 + \delta_2} K \le \epsilon/3 \label{eq: delta1_Constraint}
\end{align}
where $K \triangleq \displaystyle\max_{j} \left\{  \displaystyle\max_{p(x): |X|^2 \le \mathcal{E}_j} I(X; Y) \right\}$ is a finite constant. Furthermore, let $L$ be a large positive integer and let
\begin{align*}
n^{(j)} = L\left( (1-p)^j - \delta_1  \right)
\end{align*} 

For a given energy allocation policy $\{ \mathcal{E}_j\}$ for $j = 0,1,2,...$ that satisfies \eqref{eq: EnergyConstr}, the transmitter and the receiver agree on a sequence of $M+1$ codebooks:
\begin{align*}
\mathcal{C}^{(0)}, \mathcal{C}^{(1)}, \mathcal{C}^{(2)}, ..., \mathcal{C}^{(M)}
\end{align*}
where each codebook $\mathcal{C}^{(j)}$ consists of $2^{n^{(j)} R^{(j)}}$ codewords generated randomly from a distribution $p(x)$ whose support is limited to $[-\sqrt{\mathcal{E}_j},\sqrt{\mathcal{E}_j}]$ so that the symbols of each codeword are amplitude constrained according to 
\begin{align*}
|X_i^{(j)}(k)|^2 \le \mathcal{E}_j
\end{align*}
for all $i = 1,2, ..., n^{(j)}$, $j = 0,1,2..., M$ and $k = 1,2,..., 2^{n^{(j)} R^{(j)}}$.

During the course of communication the transmitter will aim to send one codeword from each of the $M+1$ codebooks. Communication proceeds as follows: At each time step $t$, the transmitter sees the realization of the energy process $E_t$, let $j = t - \max\{t'\le t: E_{t'} = E \}$, i.e. the number of time steps since the last time battery was recharged. Then the transmitter  selects the next untransmitted symbol from the codeword it wants to transmit from the codebook $\mathcal{C}^{(j)}$ (if all $n^{(j)}$ symbols have been transmitted, it simply transmits the zero symbol). Similarly if $j>M$, it transmits the zero symbol. Communication ends when the transmitter observes the arrival of the $L+1$'th energy packet. (We assume that communication starts with the arrival of the first energy packet). The receiver can track the codebook used by the transmitter and decode each codeword separately by collecting together the corresponding channel observations since we assume that $\{ E_t \}$ are also causally known at the receiver.

We will say that communication is in error if one of the following events occur:
\begin{itemize}
\item $E^{(j)}, j=0,\dots,M:$ by the end of the transmission, the codeword from $\mathcal{C}^{(j)}$ has not been completely transmitted.\looseness=-10
\item $E_0:$ the total duration of the communication exceeds $\frac{L}{p}(1 + \delta_2)$.\looseness=-10
\item $F^{(j)}, j=0,\dots,M:$ the codeword from codeword $\mathcal{C}^{(j)}$ is decoded erroneously at the receiver.\looseness=-10
\end{itemize}
Below we will argue that the probability of each one of these events can be made arbitrarily small by taking $L$ large enough.

To bound the probability of $E^{(j)}$, recall the inter arrival time of energy packets, denoted by  $T_1, T_2, ..., T_L$, are i.i.d. Geometric($p$) r.v's. Therefore,
\begin{align*}
\mathbb{P}(E^{(j)}) & =  \mathbb{P}\left(  \displaystyle\sum\limits_{i = 1}^{L} \mathbbm{1}\{ T_i >j \} \le n^{(j)}\right) \\
& =  \mathbb{P}\left(  \frac{1}{L} \displaystyle\sum\limits_{i = 1}^{L} \mathbbm{1}\{ T_i >j \} \le  (1-p)^j - \delta_1  \right) \\
 & \le  \epsilon_1 ( L) \rightarrow 0
\end{align*}
as $L \to \infty$ by the weak law of large number, since $\mathbb{E}[ \mathbbm{1}\{ T_i >j \} ] = \mathbb{P}(  T_i >j) = (1-p)^j$. 

Furthermore, notice the total duration of communication $N = T_1 + \cdots + T_L$ is a random quantity. Then, we have
\begin{align*}
\mathbb{P}(E_0) & =  \mathbb{P}\left(  \displaystyle\sum\limits_{i = 1}^{L} T_i > \frac{L}{p}(1 + \delta_2) \right) \\
& =   \mathbb{P}\left(  \frac{1}{L} \displaystyle\sum\limits_{i = 1}^{L} T_i > \frac{1}{p}(1 + \delta_2) \right) \\
 & \le  \epsilon_2 ( L) \rightarrow 0
\end{align*}
as  $L \to \infty$, again by the weak law of large numbers, since $\mathbb{E}[ T_i ] = 1/p$. 

On the receiver side, the decoder can decode each individual codeword via joint typical decoding once the transmission is finished. The classic channel coding theorem shows that  as long as 
$$R^{(j)} \leq \displaystyle\max_{p(x): |X|^2 \le \mathcal{E}_j} I(X; Y) $$ 
the probability of each one of the events $F^{(0)},\dots, F^{(M)}$ can be made arbitrarily small by making $L$, hence $n^{(j)}$, large enough, using codewords from each codebook $\mathcal{C}^{(j)}$.

Finally, the union bound allows us to conclude that the probability of the union of all these error events can be made arbitrarily small as $L \to \infty$ (note that $\delta_1, \delta_2 > 0$, and $M$ are fixed at the beginning).

We finally compute the average rate achieved by this communication strategy
\begin{align*}
& R \ge  \frac{p}{L(1+ \delta_2)} \displaystyle\sum\limits_{j = 0}^{M} n^{(j)} R^{(j)} \\
 = &  \frac{p}{L(1+ \delta_2)} \displaystyle\sum\limits_{j = 0}^{M} L\left( (1-p)^j - \delta_1  \right)\max_{p(x): |X|^2 \le \mathcal{E}_j} I(X; Y)   \\
 = & \displaystyle\sum\limits_{j = 0}^{M} \frac{p}{1+\delta_2} \left( (1-p)^j - \delta_1  \right) \max_{p(x): |X|^2 \le \mathcal{E}_j} I(X; Y)  \\
=   & \displaystyle\sum\limits_{j = 0}^{\infty} p(1-p)^j \max_{p(x): |X|^2 \le \mathcal{E}_j} I(X; Y)   \\ 
& - \displaystyle\sum\limits_{j = M+1}^{\infty} p(1-p)^j \max_{p(x): |X|^2 \le \mathcal{E}_j} I(X; Y)   \\ 
& -  \displaystyle\sum\limits_{j = 0}^{M} \frac{\delta_2 p}{1+\delta_2} (1-p)^j \max_{p(x): |X|^2 \le \mathcal{E}_j} I(X; Y) \\
 & - \displaystyle\sum\limits_{j = 0}^{M} \frac{p}{1+\delta_2} \delta_1 \max_{p(x): |X|^2 \le \mathcal{E}_j} I(X; Y) \\
\ge  & \displaystyle\sum\limits_{j = 0}^{\infty} p(1-p)^j \max_{p(x): |X|^2 \le \mathcal{E}_j} I(X; Y) - (\epsilon/3 +\epsilon/3+\epsilon/3) 
\end{align*}
which is what we set up to prove. The last step follows from the fact $M$, $\delta_2$ and $\delta_1$ are chosen to satisfy the bounds \eqref{eq: M_Constraint}, \eqref{eq: delta2_Constraint} and \eqref{eq: delta1_Constraint}.

\end{proof}
\begin{proof}[Proof of Theorem~\ref{thm:CapCSIR}]\footnote{This proof was pointed out to the authors by one of the anonymous reviewers.} 
 The capacity of a general channel is given by \cite[Theorem 3.2.1]{Han} as the maximum \emph{spectral inf-mutual information rate} between the input and the output,
 $$
 C= \sup_X\text{p-} \liminf_{n\rightarrow\infty} \frac{1}{n}\log \frac{P_{Y^n\vert X^n}({y^n\vert x^n})}{P_{Y^n}(y^n)},
 $$
where the supremum is over all input processes $X=\{X^n\}_{n=1}^\infty$. When the receiver has access to a side information process $G=\{G^n\}_{n=1}^\infty$, this process can be viewed as part of the output of the channel and the corresponding capacity becomes 
$$
C_G =\sup_X\; \text{p-}\liminf_{n\rightarrow\infty}\: \frac{1}{n}\log \frac{P_{Y^n,G^n\vert X^n}(y^n,g^n \vert x^n)}{P_{Y^n,G^n}(y^n,g^n)}. 
$$
We have
\begin{align}
C_G &= \sup_X\; \text{p-}\liminf_{n\rightarrow\infty}\: \frac{1}{n}\log \frac{P_{Y^n,G^n\vert X^n}(y^n,g^n \vert x^n)}{P_{Y^n,G^n}(y^n,g^n)} \nonumber\quad\\
&= \sup_X\; \text{p-}\liminf_{n\rightarrow\infty}\: \Big[ \frac{1}{n}\log \frac{P_{Y^n\vert X^n}(y^n \vert x^n)}{P_{Y^n}(y^n)}\nonumber \\
&\qquad\qquad + \frac{1}{n}\log \frac{P_{G^n\vert Y^n,X^n}(g^n \vert y^n , x^n)}{P_{G^n\vert Y^n}(g^n\vert y^n)} \Big] \nonumber
\\ 
& \le \sup_X  \text{p-}\liminf_{n\rightarrow\infty}\:  \frac{1}{n}\log \frac{P_{Y^n\vert X^n}(y^n \vert x^n)}{P_{Y^n}(y^n)}   \nonumber\\
&\quad + \sup_X \text{p-}\limsup_{n\rightarrow\infty}  \frac{1}{n}\log \frac{P_{G^n\vert Y^n,X^n}(g^n \vert y^n , x^n)}{P_{G^n\vert Y^n}(g^n\vert y^n)} \label{sum2sup} \\
& = \: C \: + \sup_X \text{p-}\limsup_{n\rightarrow\infty} \frac{1}{n}\log \frac{P_{G^n\vert Y^n,X^n}(g^n \vert y^n , x^n)}{P_{G^n\vert Y^n}(g^n\vert y^n)}  \nonumber\\
& \le \: C \: + \sup_X \text{p-}\limsup_{n\rightarrow\infty}  \frac{1}{n}\log \frac{1}{P_{G^n\vert Y^n}(g^n\vert y^n)} \label{Probless1}\\
& = C \: + \sup_{X} \bar{H}(G\vert Y) \label{VH:CondEntDef}\\
& \le C \: + \sup_{X} \bar{H}(G). \label{VH:I}
\end{align}
Here, (\ref{sum2sup}) follows from the fact that for any two sequences of random variables $(V_n)_{n=1}^{\infty}$ and $(Z_n)_{n=1}^{\infty}$ we have \cite[p.15]{Han}
\begin{equation*}
\text{p-}\liminf_{n\rightarrow \infty}(V_n + Z_n) \le \text{p-}\liminf_{n\rightarrow \infty}V_n 
\: + \: \text{p-}\limsup_{n\rightarrow \infty}Z_n.
\end{equation*} 
Note that $P_{G^n\vert Y^n,X^n}(g^n \vert y^n , x^n) \le 1$, thus (\ref{Probless1}) holds. (\ref{VH:CondEntDef}) follows by the definition of the \textit{conditional spectral sup-entropy rate} \cite[p.182]{Han}, given as
\begin{equation*}
\bar{H}(G\vert Y)=\text{p-}\limsup_{n\rightarrow\infty}\frac{1}{n}\log \frac{1}{P_{G^n \vert Y^n}(g^n\vert y^n)}.
\end{equation*}
Finally, non-negativity of the inf-mutual information rate \cite[Equations 3.2.3, 3.2.10]{Han}
\begin{equation*}
0 \le \underline{I}(G;Y) \le \bar{H}(G) - \bar{H}(G\vert Y), 
\end{equation*}
yields (\ref{VH:I}). In (\ref{VH:I}), we assume that the side information process $G$ is independent of the input process, which yields the conclusion of the theorem. Note that when the side information process is a function of the input, the capacity gain due to receiver side information can be bounded by considering the sup of the spectral sup-entropy rate of $G$ over the input process. 
\end{proof}
\begin{proof}[Proof of Theorem \ref{thm: Achievable2}] 
Here, we provide an alternative proof for Theorem \ref{thm: Achievable2} that does not use Theorem~\ref{thm:CapCSIR}. We first define a specific finite state channel and establish a relation between its capacities with and without channel state information at the receiver, denoted by $C_1$ and $C_2$ respectively. In particular, we show that $C_2\geq C_1-H(p)$. We then relate the capacity of this finite state channel to our original communication setup defined in Section~\ref{sec: SysModel}. We show that the capacity $C$ of our original energy harvesting channel without channel state information at the receiver satisfies $C\geq C_2$, while $C_1$ is larger than the right-hand side of \eqref{eq: Achievable} in Theorem \ref{thm: Achievable}. This completes the proof of the theorem.

Given any $\mathcal{E}_0,\mathcal{E}_1,\mathcal{E}_2,\dots$ that satisfy \eqref{eq: EnergyConstr2} define a finite state channel $p(y_t|x_t, s_t)$  where the output $Y_t\in\mathbb{R}$ depends on the input $X_t\in\mathbb{R}$ to the channel and the channel state $S_t\in\{0,1,..., M+1\}$ in the following way:
\begin{align}
\label{eq:fsc}
Y_t = \left\{\begin{array}{rl}
X_t + Z_t & \mbox{if } \;\;\; |X_t|^2 \le \mathcal{E}^\prime_{S_t}\\
Z_t & \mbox{otherwise}
\end{array}  \right.
\end{align}
where $Z_t$ is i.i.d. $\mathcal{N}(0,1)$ Gaussian noise. The channel state process $\{ S_t \}$ is a Markov Chain and its $(M+2) \times (M+2)$ transition matrix is given by
\begin{equation}
\label{eq: Tm}
T=\left( \begin{array}{ccccc}
p & 1-p & 0 & \cdots & 0 \\
p & 0 & 1- p &  \ddots & \vdots  \\
\vdots & \vdots & \ddots & \ddots   &  0\\
p & 0 &  & 0    &  1-p\\
p & 0 & \cdots &   0  & 1-p
 \end{array} \right)
 \end{equation}
where $T_{ij}=p(S_{t+1} = j|S_{t} = i)$. The sequence $\mathcal{E}^\prime_0, \dots, \mathcal{E}^\prime_{M+1}$ satisfies $\mathcal{E}^\prime_j=\mathcal{E}_j$ for $0\leq j\leq M$ and $\mathcal{E}^\prime_{M+1}=0$. There are no constraints on the transmitted signal $X_t\in\mathbb{R}$ over this channel. We assume that the initial state of the channel $S_0=0$ and the realization of the state sequence $\{S_t\}$ is known causally at the transmitter. We will discuss the capacity of this channel under two different assumptions: 
\begin{enumerate}
\item{The realization of the state sequence $\{S_t\}$ is also known at the receiver. In this case we will denote the capacity by $C_1$.}
\item{The realization of the state sequence $\{S_t\}$ is not  known at the receiver. In this case we will denote the capacity by $C_2$.}
\end{enumerate}
Obviously, $C_2\leq C_1$. Below, we show that $C_2$ can not be much smaller than $C_1$.

First note that the Markov Chain $\{S_t\}$ is time-invariant, aperiodic and indecomposable. For such channels Theorem 2 of \cite{DasNarayan} \footnote {Theorem 2 of \cite{DasNarayan} is stated for multiple access finite state channel. We use a simplified version of it for single-user channels, combined with part 2 of Corollary 1 in the same paper which shows that when the initial state is known to the transmitter, the capacity region is independent of the initial pmf.} establishes the following expression for the capacity,

\begin{align}
\label{eq: cap_rec_info}
C_D = \sup_{\underline{F}} \liminf_{n \to \infty}  \frac{1}{n} I (F^n ; Y^n |D^n)
\end{align}
where $D^n=(D_1, ..., D_n)$ is the channel state information available at the receiver which is of the form $D_t=g(S_t)$ for a given mapping $g(\cdot)$. Here, we will be interested in the two extremal cases when $D^n=S^n$, in which case the capacity is denoted by $C_1$, and $D^n=\emptyset$ in which case the capacity is denoted by $C_2$. Here, $F^n$ denotes $(F_{1}, \ldots , F_{n})$ where $F_{t}$ stands for a random variable that takes values in the space of all mappings $\lbrace f_t : E^t \to \mathcal{X} \rbrace$. Also, $\underline{F}$ denotes a sequence $\left\{ F^n \right\}_{n=1}^{\infty}$. Now, observe that
\begin{align}
C_1 & = \sup_{\underline{F}} \liminf_{n \to \infty}  \frac{1}{n} I (F^n ; Y^n |S^n)\nonumber \\
& \le \sup_{\underline{F}} \liminf_{n \to \infty}  \frac{1}{n} I (F^n ; Y^n , S^n)\nonumber \\
& =  \sup_{\underline{F}} \liminf_{n \to \infty} \frac{1}{n} \left( I (F^n ; Y^n) + I(F^n;S^n|Y^n) \right) \nonumber\\
%& \le \sup_{\underline{F}} \varliminf_{n \to \infty} \frac{1}{n} \left( I (F^n ; Y^n) + H(S^n)\right) \\
& \le \sup_{\underline{F}} \liminf_{n \to \infty} \frac{1}{n}  I (F^n ; Y^n) + \frac{1}{n}H(S^n) \label{eq:eq1}\\
& =  H(p) +  \sup_{\underline{F}} \liminf_{n \to \infty} \frac{1}{n}  I (F^n ; Y^n) \label{eq:eq2}\\
& = H(p)+ C_{2} \label{eq:CSIRCapDiff}
\end{align}
where \eqref{eq:eq2} follows from the fact that 
 \begin{align*}
\lim_{n\to \infty}\frac{1}{n}H(S^n) & = \lim_{n \to \infty}{H(S_{n+1} | S_{n},\ldots S_{1})}\\
& = \lim_{n \to \infty}{H(S_{n+1} | S_{n})}\\
 & = H(p)
\end{align*}
since for the transition matrix $T$ in \eqref{eq: Tm}, we have ${H(S_{n+1} | S_{n}})=H(p)$ independent of the state $S_n$. $H(p)$ refers to the binary entropy function evaluated at $p$. Note that when $\lim_{n\to \infty}\frac{1}{n}H(S^n)$ exists,  we can decompose the limit inferior of the sum of the two sequences in \eqref{eq:eq1} to the sum of the limit inferior of those sequences.  

This shows that the capacity of this channel without side information at the receiver is at most $H(p)$ bits smaller than its capacity with receiver side information. We next relate the capacity of this finite state channel to the capacity of our original channel. 

In the proof of Theorem~\ref{thm: Achievable}, we have shown that when the realization of the external energy arrival process $\{E_t\}$ is causally known to both the transmitter and the receiver, for any given energy allocation policy $\{ \mathcal{E}_j\}$ that satisfies \eqref{eq: EnergyConstr}, we can achieve a rate 
\begin{align*}
 R(M) \ge \displaystyle\sum\limits_{j = 0}^{\infty} p & (1-p)^j \max_{p(x): |X|^2 \le \mathcal{E}_j} I(X; Y) - \epsilon(M),
\end{align*} 
where $M$ determines  the maximal number of channel uses employed after each energy packet arrival and as $M\to\infty$, $\epsilon(M)\to 0$. Observe that the same strategy can be employed over the finite state channel defined in \eqref{eq:fsc} and would achieve exactly the same rate $R(M)$. (Here, the state $S_t$ will dictate which of the $M+1$ codebooks $\mathcal{C}^{(0)}, \mathcal{C}^{(1)}, \mathcal{C}^{(2)}, ..., \mathcal{C}^{(M)}$ to transmit from at given $t$.) This is because the code constructed in the achievability scheme for our original channel (with receiver side information) will satisfy $|X_t|^2 \le \mathcal{E}_{S_t}$ at every $t$  when applied over the finite state channel. This ensures that the input-output relations for the two channels are the same. Moreover, the transition matrix $T$ in \eqref{eq: Tm} of the finite state channel is designed so that the process $\{\mathcal{E}^\prime_{t}\}$  induced by $\{S_t\}$ is probabilistically equivalent to the energy allocation process $\{\mathcal{E}_{t}\}$ induced by the Bernoulli energy arrival process $\{E_{t}\}$ over the original channel. This allow us to conclude that 
$$
C_1\geq R(M),
$$
or using the result of \eqref{eq:CSIRCapDiff}
$$
C_2\geq R(M)-H(p).
$$
Finally, we want to argue that the capacity of our original system \underline{without} receiver side information is lower bounded by the capacity of this finite state channel \underline{without} receiver side information, i.e. $C \ge C_2$. Once this is established, we can conclude that $C \ge R(M) - H(p)$. Taking $M\to\infty$, hence $\epsilon(M)\to 0$, completes the proof of the theorem. 

To show that $C \ge C_2$, we argue that any code designed for the finite state channel (without receiver state information) can be applied to our original energy harvesting communication channel (again without receiver side information).

When the state information $\{S_t\}$ (or equivalently the realization of the process $\{\mathcal{E}^\prime_t\}$) is known causally at the transmitter, a code  for communicating over the finite state channel is given by a set of mappings
$$
W \to f_W^n=(f_{W, 1}, f_{W,2},..., f_{W,n})  
$$
for each message $W$ (a so-called Shannon strategy), where  for given $W$, $f_{W,t}$ is a mapping from all the transmitter side information received up to time step $t$, i.e. $S^t = (S_0, ..., S_t)$, to the space of channel input symbols, $\mathcal{X}$. This code can not be immediately applied to our original energy harvesting communication channel since the input alphabet for the finite state channel $\mathcal{X}=\mathbb{R}$ and it can potentially imply using symbols with magnitude $|X_t|^2 > \mathcal{E}^\prime_{S_t}$. However, this code can be modified, without impacting its probability of error and rate, to a form that can be immediately applied over our energy harvesting channel. 

Let us modify the code by zeroing all the symbols that violate $|X_t|^2 > \mathcal{E}^\prime_{S_t}$. In particular, we define a new sequence of functions $(\overline{f}_{W,1}, \overline{f}_{W,2}, ..., \overline{f}_{W,n})$ associated with the message $W$ as follow: 
$$
\overline{f}_{W,t}(S^t) = \mathbbm{1}_{\left\{ |f_{W,t}(S^t)|^2 \le \mathcal{E}^\prime_{S_t} \right\}} f_{W,t}(S^t).
$$

Based on the way our finite state channel is constructed, the output distributions are exactly the same whether we use the code with $f^n$ or $\overline{f}^n$, i.e. 
\begin{align*}
p(y^n|f_W^n, s^n) = p(y^n|\overline{f}_W^n, s^n) 
\end{align*}
Therefore, if we keep our decision regions at the decoder exactly the same, we can achieve the exact same rate and probability of error with both codes over the finite state channel. Now that all the potential channel input symbol generated from this new code do satisfy the constraint $|X_t|^2\leq\mathcal{E}^\prime_{S_t}$, we can apply this new code over our energy harvesting communication channel: Given the Bernoulli energy harvesting process $\{E_t\}$ define the process $\{S_t^\prime\}$ such that 
$$
S_t^\prime=\left\{\begin{array}{rl} j &\textrm{if}\quad 0\leq j\leq M,\\
M+1  &\textrm{if}\quad  j>M.
\end{array}  \right.
$$
where $j = t - \max\{t'\le t: E_{t'} = E \}$. Applying the codebook $\{f_W^n\}_W$ with side information process $\{S_t^\prime\}$ over the original energy harvesting communication channel achieves the exact same rate and probability of error as in the finite state channel. This is because the input-output relations for the two channels are the same and $\{S_t^\prime\}$  and $\{S_t\}$ are probabilistically equivalent. Also note that because of our initial assumption that $\mathcal{E}_0,\mathcal{E}_1,\mathcal{E}_2,\dots$ satisfy \eqref{eq: EnergyConstr2}, this is an energy feasible strategy for our harvesting system. This completes the proof of the theorem.
\end{proof}

\begin{proof}[Proof of Theorem \ref{thm: ConstantGapBlessE}]
The proof follows almost the same line of argument as in Theorem \ref{thm: const_gap_rate}, except we have a different constant here due to the additional $K(p) = 1.04 + H(p)$ in the $C_{lb}(B_{max}, p)$.

Again, we will first restrict the range of $(B_{max}, p)$ to be considered by noticing:
\begin{align*}
C_{ub}(B_{max}, p) - & C_{lb}(B_{max}, p )  \le C_{ub}(B_{max}, p) \\
 & = \frac{1}{2} \log(1 + pB_{max}) \\
 & \le 2.58
\end{align*}
for all $(B_{max}, p)$ such that $pB_{max} \le 34.75$. 

Next, let's restrict ourself to the set of $(B_{max}, p)$ such that $pB_{max} > 34.75$:
\begin{align*}
& C_{ub}(B_{max}, p) - C_{lb}(B_{max}, p ) \\
= & \frac{1}{2} \log \left(1 + p B_{max}\right) \\
&-  \left( \displaystyle\sum\limits_{j = 0}^{\infty} p(1-p)^j \frac{1}{2} \log\left( 1 + (1-p)^j p B_{max}  \right) - K(p)\right)^+ \\
\stackrel{(a)} \le &  \frac{1}{2} \log \left(1 + p B_{max}\right) \\ 
& - \displaystyle\sum\limits_{j = 0}^{\infty} p(1-p)^j \frac{1}{2} \log\left( 1 + (1-p)^j p B_{max}  \right) + H(p) + 1.04 \\
 \stackrel{(b)} \le & \frac{1}{2\ln(2)}\frac{1}{pB_{max}} -  \frac{1-p}{p} \frac{1}{2} \log\left(1-p\right) + H(p) + 1.04 \\
 \stackrel{(c)} \le & \frac{1}{2\ln(2)}\frac{1}{34.75} -  \frac{1-p}{p} \frac{1}{2} \log\left(1-p\right) + H(p) + 1.04 \\
 = & 1.06 + \frac{1-p}{2p} \log\left(\frac{1}{1-p}\right) + H(p) \\
 \stackrel{(d)} \le & 2.58
\end{align*}
where (a) comes from removing the $(\cdot)^+$ resulting in an upper bound and using the definition of $K(p)$. (b) uses exactly the same sequence of inequalities in the proof of Theorem \ref{thm: const_gap_rate}, so we won't repeat here. (c) comes from the fact we restrict ourself to $pB_{max} > 34.75$. Finally, (d) comes from the fact:
\begin{align*}
g(p) & \triangleq \frac{1-p}{2p} \log\left(\frac{1}{1-p}\right) + H(p) \\
&= \frac{1-p}{2p} \log\left(\frac{1}{1-p}\right) + p \log\left( \frac{1}{p} \right) + (1-p) \log\left( \frac{1}{1-p} \right)
\end{align*}
is a continuous bounded function on $p \in (0,1)$. Furthermore, it's concave and attains a maximum value of 1.52 at  $p = 0.413$. 
\end{proof}

\end{document}